\newif \ifgenerate
\def\baseimagedir{}
\def\tikzpath{\baseimagedir tikz/fragments/}
\theoremstyle{plain}	
\theoremstyle{plain} 	
\theoremstyle{plain} 	\newtheorem{The}{Theorem}
\theoremstyle{plain} 	
\theoremstyle{plain} 	
\theoremstyle{plain}	
\theoremstyle{plain}	\newtheorem{Def}{Definition}
\theoremstyle{plain}	
\renewenvironment{proof}[1][\proofname]{%
\par\pushQED{\qed}\normalfont%
\topsep6\p@\@plus6\p@\relax\trivlist%
\item[\hskip\labelsep\bfseries#1\@addpunct{.}]\itshape\ignorespaces}{%
\popQED\endtrivlist\@endpefalse}%
\def\clap#1{\hbox to 0pt{\hss#1\hss}}
\newcommand{\eM}     {\mbox{$\epsilon$-machine}}
\newcommand{\eMs}    {\mbox{$\epsilon$-machines}}
\newcommand{\MeasSymbol}   { {X} }
\newcommand{\meassymbol}   { {x} }
\newcommand{\Past}	{ \MeasSymbol_{:0} }
\newcommand{\Future}	{ \MeasSymbol_{0:} }
\newcommand{\CausalState}	{ \mathcal{S} }
\newcommand{\causalstate}	{ \sigma }
\newcommand{\AlternateState}	{ \mathcal{R} }
\newcommand{\Prob}      {\Pr} 
\newcommand{\Cmu}		{C_\mu}
\newcommand{\hmu}		{h_\mu}
\newcommand{\EE}		{{\bf E}}
\newcommand{\PC}		{\chi}
\newcommand{\forward}{+}
\newcommand{\reverse}{-}
\newcommand{\forwardreverse}{\pm} 
\newcommand{\FutureCausalState}	{ {\CausalState}^{\forward} }
\newcommand{\PastCausalState}	{ {\CausalState}^{\reverse} }
\newcommand{\lastindex}[2]{
  \edef\tempa{0}
  \edef\tempb{#2}
  \ifx\tempa\tempb
    \edef\tempc{#1}
  \else
    \edef\tempa{0}
    \edef\tempb{#1}
    \ifx\tempa\tempb
      \edef\tempc{#2}
    \else
      \edef\tempc{#1+#2}
    \fi
  \fi
  \tempc
}
\newcommand{\BE}[2][0]{
  \ensuremath{H[\MeasSymbol_{{#1}:{#2}}]}
}
\newcommand{\BSE}[2][0]{
  \ensuremath{H[\MeasSymbol_{{#1}:{#2}} \CausalState_{#2}]}%
}
\newcommand{\SBE}[2][0]{
  \ensuremath{H[\CausalState_{#1} \MeasSymbol_{{#1}:{#2}}]}
}
\newcommand{\AltSBE}[2][0]{
  \ensuremath{H[\AlternateState_{#1} \MeasSymbol_{{#1}:{#2}}]}
}
\newcommand{\AltBSE}[2][0]{
  \ensuremath{H[\MeasSymbol_{{#1}:{#2}} \AlternateState_{#2}]}
}
\newcommand{\COrder}{k}
\newcommand{\MOrder}{R}
\newcommand{\CSjoint}[1][,]{
   \edef\tempa{:}
   \edef\tempb{#1}
   \ifx\tempa\tempb
      \ensuremath{\FutureCausalState\!#1\PastCausalState}
   \else
      \ensuremath{\FutureCausalState#1\PastCausalState}
   \fi
}
\newif\ifpm 
\edef\tempa{\forwardreverse}
\edef\tempb{\pm}
\newcommand{\MS} {\MeasSymbol}
\begin{document}

\title{How Hidden are Hidden Processes?\\
A Primer on Crypticity and Entropy Convergence}

\author{John R. Mahoney}
\email{jmahoney3@ucmerced.edu}
\affiliation{Physics Department,
University of California at Merced,\\
5200 North Lake Road Merced, CA 95343}

\author{Christopher J. Ellison}
\email{cellison@cse.ucdavis.edu}
\affiliation{Complexity Sciences Center\\
Physics Department, University of California at Davis,\\
One Shields Avenue, Davis, CA 95616}

\author{Ryan G. James}
\email{rgjames@ucdavis.edu}
\affiliation{Complexity Sciences Center\\
Physics Department, University of California at Davis,\\
One Shields Avenue, Davis, CA 95616}

\author{James P. Crutchfield}
\email{chaos@cse.ucdavis.edu}
\affiliation{Complexity Sciences Center\\
Physics Department, University of California at Davis,\\
One Shields Avenue, Davis, CA 95616}
\affiliation{Santa Fe Institute,
1399 Hyde Park Road, Santa Fe, NM 87501}

\date{\today}
\bibliographystyle{unsrt}

\begin{abstract}
We investigate a stationary process's crypticity---a measure of the difference
between its hidden state information and its observed information---using
the causal states of computational mechanics. Here, we
motivate crypticity and cryptic order as physically meaningful
quantities that monitor how hidden a hidden process is.
This is done by recasting previous results on the convergence of block entropy and block-state
entropy in a geometric setting, one that is more intuitive and that leads
to a number of new results. For example, we connect crypticity to how an
observer synchronizes to a process. We show that the block-causal-state entropy
is a convex function of block length. We give a complete analysis of spin chains. We present a classification scheme that surveys stationary processes in
terms of their possible cryptic and Markov orders. We illustrate related entropy convergence behaviors using a new form of foliated information diagram.
Finally, along the way, we provide a variety of interpretations of crypticity
and cryptic order to establish their naturalness and pervasiveness. Hopefully,
these will inspire new applications in spatially extended and network
dynamical systems.

\vspace{0.1in}
\noindent
{\bf Keywords}: crypticity, stored information, statistical complexity,
excess entropy, information diagram, synchronization, irreversibility
\end{abstract}

\pacs{
02.50.-r  
89.70.+c  
05.45.Tp  
02.50.Ey  
02.50.Ga  
}
\preprint{Santa Fe Institute Working Paper 11-08-XXX}
\preprint{arxiv.org:1108.XXXX [physics.gen-ph]}

\maketitle
\

\tableofcontents
\setstretch{1.1}

\newpage
{\bf
A black box is a metaphor for ignorance: One cannot see inside, but the
presumption is that something, unknown in whole or in part, is there to be
discovered. Moreover, the conceit is that the impoverished outputs from the
box do contain something partially informative.
Physically, ignorance comes in the act of measurement---measurements that
are generically incomplete, inaccurate, and infrequent.
Since measurements dictate that one can have only a partial view, it goes 
without saying that these
distortions make discovery both difficult and one of the key challenges
to scientific methodology.
Measurement necessarily leads to our viewing the world as being hidden
from us. Of course, the world is not completely hidden. If it
were, then there would be neither gain nor motivation to probing
measurements to build models.
Scientific theory building and its experimental verification operate, then,
in the framework of hidden processes---processes from which we have
observations from which, in turn, we attempt to understand the hidden mechanisms.
At least philosophically, this setting is not even remotely new.
The circumstance is that addressed by Plato's metaphor of our
knowledge of the world deriving from the data of shadows on a cave wall.

Fortunately, we are far beyond metaphors these days.
Hidden processes pose a quantitative question: How hidden are they? Here,
we show how to quantitatively measure just this: How much
internal information is hidden by
measuring a process? Of course, this assumes, as in the black box metaphor,
that there is something to be discovered. The tool we use to
ground the intentional stance of discovering the internal
mechanisms---to say \emph{what} is hidden---is computational
mechanics. Computational mechanics is a theory of what patterns
are and how to measure a hidden process's degree of structure and organization.
Computational mechanics has a long history, though, going back to
the original challenges of nonlinear modeling posed in the 1970s
that led to the concept of reconstructing ``geometry from a time
series''.
The explorations here can be seen in this light, with one
important difference: Computational mechanics shows that
measurements of a hidden process tell how the process's internal
organization should be represented. Building on this, we
develop a quantitative theory of how hidden processes are.
}

\section{Introduction}

Many scientific domains face the confounding problems of defining and
measuring information processing in dynamical systems. These range from
technology to fundamental science and, even, epistemology of science
\cite{Crut10b}:
\begin{enumerate}
\item \emph{The 2020 Digital Roadblock}: The end of Moore's scaling laws for
	microelectronics \cite{Moor65a,Moor75a,Moor95a}.
\item \emph{The Central Dogma of Neurobiology}: How are the intricate physical,
	biochemical, and biological components structured and coordinated to
	support natural, intrinsic neural computing?
\item \emph{Physical Intelligence}: Does intelligence require biology, though?
	Or can there be alternative nonbiological substrates which support system
	behaviors that are to some degree ``smart''.
\item \emph{Structure versus Function}: Intelligence aside, how do we define
	and detect spontaneous organization, in the first place?
	How do these emergent patterns take on and support functionality?
\end{enumerate}
Many have worked to quantify various aspects of information
dynamics; cf. Ref. \cite{Atma91a}. One
often finds references to information storage, transfer, and processing.
Sophisticated measures are devised to characterize these quantities in
multidimensional settings, including networks and adaptive systems.

Here, we investigate foundational questions that bear on all these domains,
using methods
with very few modeling and representation requirements attached
that, nonetheless, allow a good deal of progress. In quantifying
information processing in stochastic dynamical systems, two measures have
repeatedly appeared and been successfully applied: the past-future mutual
information of observations (excess entropy) 
$\EE$~\cite[and references therein]{Crut01a} and
the internal stored information (statistical complexity) $\Cmu$
\cite{Crut88a}. Curiously, the difference between these measures---the
crypticity $\PC$ \cite{Crut08a}---has only recently received attention. To 
our knowledge,
the first attempt to understand $\PC$ directly was in
Ref.~\cite{Maho09a}.
The following provides additional perspective and clarity to the
results contained there and in the related works of
Refs.~\cite{Jame11a,Crut08a,Crut08b}. In particular, we add to the
body of knowledge surrounding crypticity and cryptic order,
develop a further classification of the space of processes, and
introduce several alternative ways to visualize these concepts.
An appendix demonstrates that crypticity captures a notable and unique
property, when compared to alternative information measures.
The goal is to provide a more intuitive and geometric toolbox for
posing and answering the increasing range and increasingly more
complex research challenges surrounding information processing
in nature and technology.

\section{Definitions}

We denote contiguous groups of random variables
$\MS_i$ using $\MS_{n:m+1} = \MS_n \ldots \MS_m$. A semi-infinite group is 
denoted either $\MS_{n:} = \MS_n \MS_{n+1} \ldots$ or
$\MS_{:n} = \ldots \MS_{n-2} \MS_{n-1}$.
We refer to these as the \emph{future} and the \emph{past},
respectively. Consistent with this, the bi-infinite chain of random
variables is denoted $\MS_{:}$. A \emph{process} is
specified by the distribution $\Prob(\MS_{:})$. Throughout
the following, we assume we are given a stationary process.

Please refer to Refs.~\cite{Maho09a, Crut10a} for supplementary definitions
of presentations, causal states, \eMs, unifilarity, co-unifilarity, Shannon
block information, information diagrams, and the like. The following assumes
familiarity with these concepts and the results and techniques there.
However, our development calls for a few reminders.

There are two notions of memory central to characterizing stochastic
processes. These are the excess entropy $\EE$ (sometimes called the predictive
information) and the statistical complexity $\Cmu$. The excess
entropy is a measure of correlation between the past and future:
the degree to which one can remove uncertainty in the future given
knowledge of the past. (This is illustrated as the green
information atom at the intersection the past and future in the
information diagram of Fig.~\ref{fig:CrypticityDef}.) The
statistical complexity is a quantity that arises in the context of
modeling rather than prediction. Specifically, $\Cmu$ is the
amount of information required for an observer to synchronize a
stochastic process. In the setting of finite-state hidden Markov
models, it is the information stored in the process's causal
states.

Then, we have the crypticity:

\begin{Def}
A process's \emph{crypticity} $\PC$ is defined as:
\begin{align*}
\PC = H[\CausalState_0 | \MS_{0:}] ~,
\end{align*}
where $\CausalState_t$ is a process's causal state at time $t$.
\label{def:Crypticity}
\end{Def}

Clearly, the definition relies on having a process's \eM\ presentation;
the states used are causal states. Other presentations, whose
alternative states we denote $\AlternateState$, suggest an
analogous, but more general
definition of crypticity; cf. Ref. \cite{Crut10a}.

\begin{figure}[ht]
\centering
\includegraphics{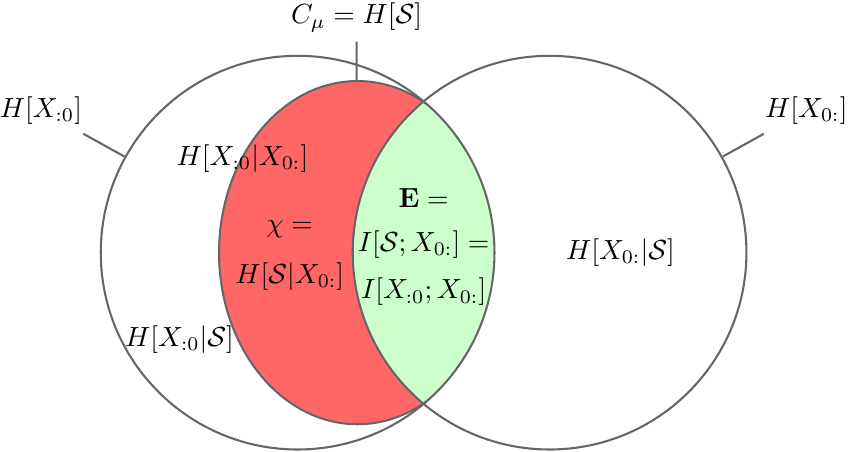}
\caption{Crypticity $\PC$ is represented by the red (dark) crescent shape in
  this \eM\ I-diagram. The excess entropy $\EE$, by the (green) overlap of
  the past information $H[\Past]$ and future information $H[\Future]$.
  The statistical complexity $\Cmu$ is the information in the internal causal
  states $\CausalState$ and comprises both $\PC$ and $\EE$.
  For a review of information measures and diagrams refer to the citations
  given in the text or quickly read the first portions of
  Sec. \ref{sec:IDiagrams}.
  }
\label{fig:CrypticityDef}
\end{figure}

To give us something to temporarily hang our hat on, it turns out that the
crypticity is simply how much stored information is hidden from observations.
That is, it is the difference between the internal stored information
($\Cmu$) and the apparent past-future mutual information ($\EE$).
This is directly illustrated in Fig.~\ref{fig:CrypticityDef}.

We are also interested in the range required to ``learn'' the
crypticity. This is the \emph{cryptic order}.

\begin{Def}
A process's \emph{cryptic order} $k$ is defined as:
\begin{align*}
k = \min \{ L \in \mathbb{Z}^+ :
  H[\CausalState_L | \MS_{0:}] = 0 \} ~.
\end{align*}
\label{def:CrypticOrder}
\end{Def}

These definitions do not easily admit an intuitive interpretation.
Their connection to hidden stored information is not immediately
clear, for example. They mask the importance and centrality of the crypticity 
property. Given this, we devote some effort in the following to motivate them 
and to give several supplementary interpretations.
As a start, Fig. \ref{fig:CrypticityDef} gives a graphical definition of
crypticity using the \eM\ information diagram of Ref. \cite{Crut08a}.
It is the red crescent highlighted there, which is the state information
$\Cmu = H[\CausalState]$ minus that information derivable from the
future $H[\Past] \equiv H[\MS_{0:}]$.
This begins to explain crypticity as a measure of a process's
hidden-ness. We'll return to this, but first let's consider
several other alternatives.

\section{Crypticity: From state paths to synchronization}
\label{sec:Paths}

Crypticity and, in particular, cryptic order have straightforward 
interpretations when one considers the internal state-paths taken as an observer
synchronizes to a process \cite{Crut01b}. In this, cryptic order is seen to be
analogous to, and potentially simpler than, a process's Markov order. While
both the Markov and cryptic orders derive from a notion of synchronization,
the cryptic order depends on a subset of the paths realized during
synchronization. We illustrate this via an example: The $(R,k)$-Golden Mean
Process---a generalization of the Golden Mean Process with tunable
Markov order $R$ and tunable cryptic order $k$. In particular, we
examine the $(3,2)$-Golden Mean Process shown in Fig.~\ref{fig:RkGoldenMean}.

\begin{figure}[ht]
\includegraphics{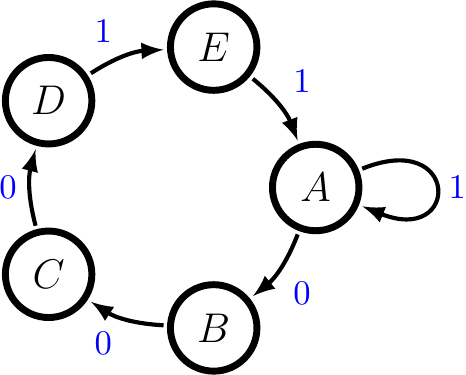}
\caption{The $(3,2)$-Golden Mean \eM: Markov order $3$ and cryptic order $2$.
  }
\label{fig:RkGoldenMean}
\end{figure}

It is straightforward to verify that the only words of length 3
generated by this process are $\{000,001,011,100,110,111\}$. Since
the process is Markov order $3$ (by construction) we know that
each of these words is a synchronizing word \footnote{A
\emph{synchronizing word} is symbol sequence that induces one and
only one causal state. This is in contrast with a \emph{minimal}
synchronizing word, of which no proper prefix is a synchronizing
word.}.
Some words lead to synchronization in fewer than three steps,
though. For instance, $011$ yields synchronization to state $E$ after 
just the first two symbols $01$.

In Fig.~\ref{fig:paths}, we display the internal-state paths taken by each
possible initial state under evolution governed by the six
synchronizing words. Let's take a moment to describe these
illustrations carefully. Before reading any word, there is maximum
uncertainty in the internal state. We represent this using a
circle for each of the five causal states of the \eM. Each of
these states is led to a next state by following the first symbol
seen \footnote{The fact that any state is led by any symbol to at
most one next state is the property known as unifilarity---a
direct consequence of the states being causal states.}.
For word $001$, the first symbol is $0$, and $A$, for instance,
is led to $B$. Notice that $E$ is not led to any state. This is
because $E$ has no outgoing transition with symbol $0$. The path
from $E$, therefore, ends and is not considered further. The
termination of paths is one of the important features of
synchronization to note.

Looking at the synchronizing word $100$, we see that the transition on the
first symbol $1$ takes both states $A$ and $E$ to the same state $A$. Since
we use unifilar presentations (\eMs), this merging can never be undone.
Path merging is yet another important feature.

Both the termination and merging of paths are relevant to synchronization, 
but have different roles in the determination of the Markov and cryptic orders.

Although we already know the Markov order of this process, we can
read it from Fig.~\ref{fig:paths} by looking at the lengths for
each word where only one path remains. These lengths
$\{3,3,2,2,2,2\}$ are marked with orange diamonds. The maximum value
of this length is the Markov order ($3$, in this example).

\begin{figure}[ht]
\includegraphics{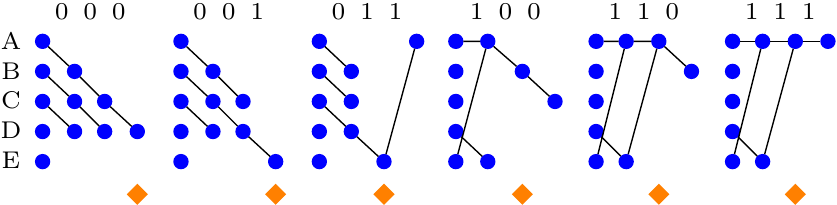}
\caption{Synchronization paths for $(3,2)$-Golden Mean \eM: Each
  synchronizing word induces a set of state-paths; some of which
  terminate, some of which merge.
  }
\label{fig:paths}
\end{figure}

In the next illustration, Fig.~\ref{fig:paths_cryptic}, we keep
only those paths that do not terminate early. In this way, we
remove paths that generally are quite long, but that terminate
before having the chance to merge with the final synchronizing
paths. We similarly mark, with green triangles,
the lengths where these reduced paths have ultimately merged. Note
that restricting paths can only preserve or decrease each length. Finally, 
in analogy to the Markov order, the maximum of these lengths \{0,0,0,1,2,2\} 
is the cryptic order (2 in this example).

\begin{figure}[ht]
\includegraphics{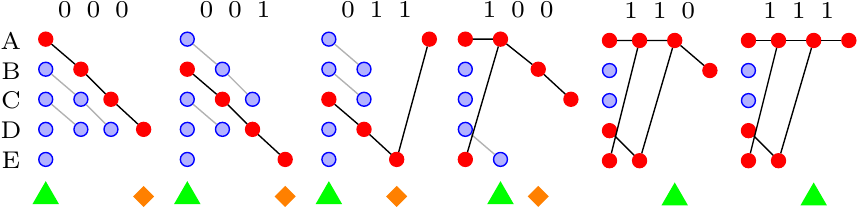}
\caption{The paths that are not terminated before the Markov order are
  highlighted in red. These are the paths relevant for the cryptic order.
  For each word the contribution to Markov order is still indicated by an
  orange diamond, whereas the contribution toward cryptic order is indicated
  by a green triangle.
  }
\label{fig:paths_cryptic}
\end{figure}

This demonstrates how crypticity relates to paths and path merging. It is a
small step then to ask for a direct connection to co-unifilarity
\cite{Maho09a}: $H[\CausalState_0 | \MS_0 \CausalState_1] = 0$. In fact, there
are three primary equivalent statements about a process: (i) its \eM\ being
co-unifilar, (ii) its $\PC = 0$, and (iii) its cryptic order $k = 0$. (Appendix~\ref{app:counif}
presents a proof of this equivalence in terms of entropy growth
functions and includes the connection to cryptic order as well.)

This exposes the elementary nature of the cryptic order as a property of
synchronizing paths. Appendix \ref{sec:RetroPath} goes further to show that
state-paths traced similarly, but in the reverse time direction, are the same
as those singled-out in the forward direction, as just done. The remainder of
this section offers different perspectives on crypticity, some of which are
less strict, but provide intuition and suggest its broad applicability.

\subsection{Global versus Local}
Imagine a synchronization task involving a group of agents. The agents begin 
in different locations (states) and move to next locations based on the 
synchronization input they receive from a common controller. The goal is to 
provide a uniform input that causes (a subset of) the agents to arrive at the 
same location. This is reminiscent of a road-coloring problem. In many 
road-coloring contexts, only uniform-degree graph structures are investigated, 
largely due to theoretical tractability. However, real-world graphs are rarely 
uniform degree. This means that some agents may receive instructions that they 
cannot carry out. These agents quit, and their paths are terminated. Assuming 
that the instructions are synchronizing for some subset of the agents 
(the instruction is a synchronizing word), the synchronization task will end 
with this subset of agents at the desired destination.

There are two ways in which we may view this process. One is global, and 
corresponds to the Markov order, while the other is local and corresponds to 
the cryptic order.

If we monitor the entire collection of agents from a bird's eye view evolving 
under the synchronization input, we observe paths terminating and merging. 
Our global notion of synchronization is the point at which each path is either 
terminated or merged with every other valid path. This is clearly coincident 
with the description of Markov order previous described.

Alternatively, we monitor the collective by querying the agents
after the task is complete. The unsuccessful agents, whose paths were 
terminated, never arriving at the destination, cannot be queried. From this 
viewpoint, synchronization takes place \emph{relative} to the group of agents 
that were not terminated. As locally interacting entities, they know the latest 
time at which an agent merged with their group---the group which ultimately 
synchronized. Even after this event, there may be other agents still operating 
that will inevitably be terminated at some later time. This means that from the 
local (agent) perspective, synchronization may happen earlier than from the 
global (controller) perspective.

We claim, based on this setting, that the cryptic order has a straightforward 
and physically relevant basis in the context of synchronization. Upcoming 
discussions, some more technical, will emphasize this point
further, as well as demonstrate new results.

\subsection{Mazes and Stacks}

The Markov versus cryptic order distinction is relevant to any
maze-solving algorithm \footnote{The maze example is not a stationary process,
so there are some important differences. For instance, there can be words
that terminate \emph{after} the length of the longest maze solution.}.
Imagining the solution of a maze as a sequence of moves---left,
right, or straight---we may write
down a list of potential solutions (which must contain all actual
solutions) by listing all $3^{N^2}$ sequences \footnote{Assume an
$N$ by $N$ maze. A nonintersecting solution cannot contain more
instructions than there are locations within the maze.}.
A brute-force algorithm tries all of these paths. Since we are
interested in worst-case scenarios, many of the details (e.g., depth-
versus breadth-first search) are not relevant. What \emph{is}
relevant is the object that the algorithm must maintain in memory
or that it ultimately returns to the user.

An algorithm might try out each potential solution, feeding in each move 
sequentially and testing for either maze completion or termination (walking 
into a wall or a previously visited location) at each step. The end of each 
solution is marked with a length. When all solutions have been tried, this 
set of solutions and lengths is returned. While this is not a stationary 
stochastic process, we may think of the longest of these lengths as being 
similar to the Markov order. The speed and memory use of this algorithm are 
obviously improved by using a tree structure, but this does not affect the 
result we are interested in.

If we were only interested in paths which end in maze completion, an even more
memory-conscious algorithm would realize that dead-ends in the tree could be 
removed. One accomplishes this with a stack memory for the active-path tree
branch. Reaching a nonsolving termination, the algorithm pops the
end states until returning to the most recent unexplored option.
This process continues recursively until the tree has been filled
out. The relevant lengths are now the lengths of the
maze-completing paths (all root-to-leaf paths), the longest of
which is an analog of the cryptic order.


\subsection{Transient versus Relaxed}

Rather than using the global versus local distinction, we can think in terms
of a dynamical view of synchronization. We might imagine a collection of ants 
attempting to create paths from a resource-rich region to their nest; or a 
watershed in the process of forming the transport network from collection 
regions to the main body of water. Until these networks develop, it is not 
clear which will become the important paths.

A log not worth climbing over causes ants to make the effort less often,
thereby dropping less pheromone, leading fewer ants to attempt this path,
until finally it is empty. Similarly, slow water deposits more sediment and
fills underused channels. As these networks evolve from an initial transitory
state to relaxed state, the types of paths within the network and their
synchronization properties change. In particular, while the
early-time synchronization depends on the terminating paths, the
later-time synchronization will not. In this dynamical picture we
see that a property akin to cryptic order emerges as the system evolves.

\subsection{Naive versus Informed}

It is only a small step from this dynamical picture to view these
self-reinforcing systems as evolving from naive to informed states. Over
time, a system ``realizes'' which paths are undesirable and quits them.
Consider an individual learning to navigate a new city. She will
experience a similar network evolution, where the pruning of
dead-end paths is an intentional act. This navigation structure
also will tend to reflect the cryptic order.

\subsection{Statistical Complexity versus Crypticity}

In addition to describing the Markov and cryptic orders via a
dynamical picture of synchronization, we can explore the same
phenomenon with the associated entropies, a more statistical
perspective.

Beginning with the global view, the distribution over the set of
all starting points is the state entropy $H[\CausalState_0]$,
commonly called the statistical complexity $\Cmu$. By considering
the initial state distribution conditioned on the removal of the
terminating paths, we are left with only a portion of this
entropy, and this is the crypticity $\PC$ \footnote{To be more precise, it
is not so much that the statistical complexity is derived from
considering all paths as it is derived from considering \emph{no}
paths.}. As discussed, we might consider this removal a result of memory, 
relaxation, or prescience. 

\section{Crypticity through Information Theory}

The discussion above in terms of paths is relatively intuitive.
The original conception, however, was not in terms of paths, but
rather in terms of information-theoretic quantities. Information
identities based on \eMs\ are beginning to provide a growing set
of interpretations; some more subtle, some more direct than others.
The following will show that crypticity and cryptic order have diverse
implications and also that even elementary information-theoretic
quantities form a rich toolset.

\subsection{Crypticity}

The \eM\ causal presentation pairs up pasts with futures in a way appropriate
for prediction. Since pasts can be different but predictively equivalent, this
pairing operates on sets of pasts that, in turn, are equivalent to the causal
states themselves. Furthermore, a single past can be followed by a set of
futures. This is natural since the processes are stochastic. So, any past or 
predictively equivalent group of pasts is linked to a distribution of futures.
Finally, these future distributions often overlap. As we will now
show, crypticity is a measure of this overlap.

Historically, it has taken some time to sort out the similarities and
differences between various measures of memory. Eventually, two emerged
naturally as key concepts: $\Cmu$, the statistical complexity or
information processing ``size'' of the internal mechanism;
and $\EE$, the excess entropy, or the apparent (to an observer)
amount of past-future mutual information. It has been recognized for some time
\cite{Crut97a,Shal98a} that $\Cmu$ is an upper bound on
$\EE$. The strictness of this inequality and the nature of
the relationship between the two, however, was not significantly
explored until Ref.~\cite{Maho09a}. The first simple statement
\cite{Crut08a} about crypticity in terms of information-theoretic
quantities is that it is the quantifiable difference between two
predominant measures of information storage: $\PC = \Cmu - \EE$.

Taking this view a bit further, since $\EE$ is the amount of
uncertainty in the future that one can reduce through study of the
past, and $\Cmu$ is the amount of information necessary to do
optimal prediction (using a minimal predictor), their difference is
the amount of \emph{modeling overhead}. One may object that a minimal
optimal predictor should not require more information than will be
made use of. In fact it is known that many processes with large
$\PC$ have nonunifilar representations that are much smaller \cite{Maho09a}.
What is not obvious is that this is simply a re-representation of the causal
states as mixtures of the new states \cite{Crut08b,Maho09b}.
In other words, \emph{the overhead is inescapable.} This suggests a useful language with
which to discuss stochastic processes---not only do we identify a
process with an $\epsilon$-\emph{machine}, but we analyze the efficiency
of these machines in terms of required resources.

For the following, we briefly invoke the use of the reverse \eM, the causal
representation of a process when scanned in reverse, to extend our
view of crypticity. (For details on reverse causal states,
see Refs. \cite{Crut08a,Crut08b}.) Recall that forward causal
states are built for prediction and, similarly, reverse causal
states are built for retrodiction. We say they are ``built'' for
these purposes in the sense that they are minimal and optimal, two desirable design goals. 

Given this, it is somewhat surprising to see that forward causal states are
better at retrodiction than reverse causal states. The information diagram
in Fig. \ref{fig:BetterRetro} illustrates this. We will now show that the
degree to which this is true is precisely the forward process's crypticity.
Here, we write this difference in retrodictive uncertainty as follows:
\begin{align*}
H[\MS_{-L:0} | \PastCausalState_0] - H[\MS_{-L:0} | \FutureCausalState_0] &\ge 0 
	~.
\end{align*}
Then, this difference converges to $\PC$:
\begin{align*}
\PC = \lim_{L \to \infty} 
	\bigl( 
	H[\MS_{-L:0} | \PastCausalState_0] - H[\MS_{-L:0} | \FutureCausalState_0]
	\bigr) ~.
\end{align*}

We might wonder why the reverse causal states were not built to be
better at their job. This is explained by the fact that the
information input to the above constructs is not equivalent. The
forward causal states are built from the past, while the reverse
causal states are built from the future. It is no surprise, then, that the 
forward states can offer information about the pasts from which they were 
built. It is more interesting to consider why they do not maintain all of this 
information. This is because the forward states were designed for predicting
a stochastic process, a goal for which maintaining information about the past offers 
diminishing returns.

\begin{figure}
\includegraphics[width=\linewidth]{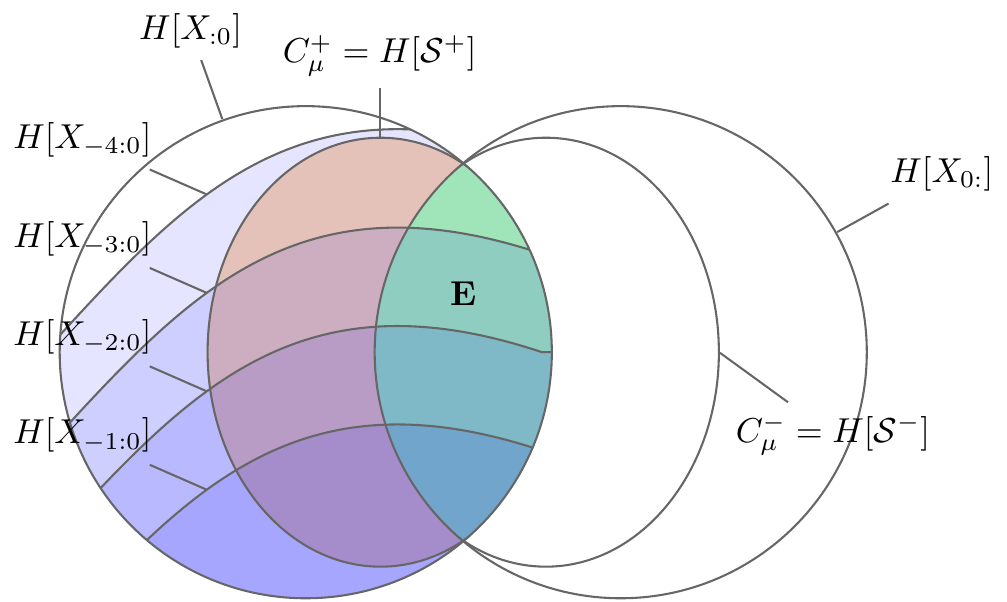}
\caption{Crypticity as the degree to which forward causal states are
  better retrodictors than reverse causal states.
  }
\label{fig:BetterRetro}
\end{figure}

Rather than comparing the function of two objects (forward and
reverse causal states), we can compare two functions of the same
object. In this light, the crypticity is the degree to which
forward causal states are better at retrodiction than they are at
prediction. More precisely, we have:
\begin{align*}
H[\MS_{0:L} | \CausalState_0] & - H[\MS_{0:L} | \CausalState_L]\\
&= H[\CausalState_0 \MS_{0:L}] - H[\MS_{0:L} \CausalState_L] \\  
&= H[\CausalState_0 | \MS_{0:L} \CausalState_L] - H[\CausalState_L | \CausalState_0 \MS_{0:L}] \\
&= H[\CausalState_0 | \MS_{0:L} \CausalState_L] \\
&\ge 0 ~.
\end{align*}
The first step follows from stationarity, the second appeals to an
informational identity, and the next to unifilarity of the \eM.
Similarly, this difference converges to $\PC$:
\begin{align*}
\lim_{L \to \infty}{H[\CausalState_0 | \MS_{0:L} \CausalState_L]} 
&= \lim_{L \to \infty}{H[\CausalState_0 | \MS_{0:L}]} =\PC
\end{align*}
Thus, crypticity is the amount of information that, although necessary for 
current prediction, must be erased at some future time.

\subsection{Cryptic Order}

Many of these statements about uncertainty can be rephrased in terms of 
length scales. The length scale associated with the crypticity is the cryptic 
order: the distance we must look into the past to discover the
modeling overhead. Following our discussion of forward 
and reverse states, we can interpret
cryptic order as the length at which the difference converges to $\PC$:
\begin{align*}
 k = \min\{ L : H[\MS_{-L:0} | \PastCausalState_0] 
 				- H[\MS_{-L:0} | \FutureCausalState_0] = \PC \} ~.
\end{align*}
Stated differently, it is the length at which all advantage of a forward state 
over a reverse state as a retrodictor is lost. In other words:
\begin{align*}
k = \min \{ L :  H[\MS_0 | \MS_{1:L+1} \FutureCausalState_{L+1} ] =
  H[\MS_0 | \MS_{1:L+1} \PastCausalState_{L+1} ] \} ~.
\end{align*}

Equivalently cryptic order is the length at which a forward state's uncertainty
in prediction and retrodiction equalize. More colloquially, it is the range
beyond which a forward state is equally good at prediction and
retrodiction, or:
\begin{align*}
k = \min \{  L : H[\MS_L | \CausalState_0 \MS_{0:L}] = H[\MS_0 |
  \MS_{1:L+1} \CausalState_{L+1} ] \} ~.
\end{align*}

As Sec.~\ref{sec:Paths} suggested, the cryptic order $k$ is closely
analogous to the Markov order $R$. Here, we state the parallel formally:
\begin{align*}
    R &= \min \{ L : H[\CausalState_L | \MS_{0:L}] = 0 \}\\
    k &= \min \{ L : H[\CausalState_L | \MS_{0:L}, \MS_{L:} ] = 0\} ~.
\end{align*}
Appendix \ref{sec:WhyCrypticity} argues for the uniqueness of this parallel.

Cryptic order is the largest noninferable state sequence. Given an infinite
string of measurements $\ldots \meassymbol_{-2} \meassymbol_{-1} \meassymbol_0$,
one eventually synchronizes to a particular causal state \cite{Trav10b}, for
any finite-state \eM. The same symbol sequence can then be used to retrodict 
states beginning at the point of synchronization. All but the earliest
$\COrder$ states can be definitively retrodicted regardless of which observed
sequence (and resulting predictive state) occurs.

\section{Crypticity and Entropy Convergence}

It has become increasingly clear that entropy functions are useful characterizations of processes. Since a process is a bi-infinite collection of
random variables \footnote{One might choose to consider the process to be a
bi-infinite collection of symbols or consider including causal states as well.
We could similarly consider reverse causal states.}, it typically is not useful
to calculate the entropy of the entire collection. The alternative strategy is
to analyze the entropy of increasingly large finite portions. The scaling,
then, captures the system's bulk properties in the large-size (thermodynamic)
limit, as well as how those properties emerge from the individual components.

These functions capture much of the behavior that we are interested in here.
The block entropy $\BE{L}$ was used to great effect in Ref.~\cite{Crut01a}
to understand the way perceived randomness may be reformulated as structure,
when longer correlations are considered. More recently, Ref.~\cite{Crut10a} used
extended functions---the block-state entropy $\AltBSE{L}$ and the state-block
entropy $\AltSBE{L}$---to explore the relationship between alternate
presentations of a process and the information theoretic measures of
memory in a presentation.

We will borrow these two new entropy functions and turn them back on the
canonical set of presentations, \eMs, to expose the workings of crypticity.
The result is a graphical approach that offers a more intuitive understanding
of the results originally developed in Ref.~\cite{Crut08b}. Using this, we
sharpen several theorems, discover new bounds, and pose additional challenges.

\subsection{Block Entropy}

The \emph{block entropy} $H[\MS_{0:L}]$ is the joint Shannon entropy of
finite sequences. As it is treated rather thoroughly in Ref.~\cite{Crut01a},
we simply recall several of its features.

First, recall that $\MS_{0:0}$ represents the random variable for a null
observation and, since there is just one way to do this, $H[\MS_{0:0}] = 0$.
As $L$ increases, the block entropy curve is a nondecreasing,
concave function that limits to the linear asymptote $\EE + \hmu L$, where
$\EE$ is the excess entropy and $\hmu$ is the process entropy rate.

Given a block entropy curve, Markov processes are easily identified since
the curve reaches it linear asymptote at finite block length. That is,
the Markov order $R$ satisfies:
\begin{align*}
	R \equiv \min \, \{ L : H[\MS_{0:L}] = \EE + \hmu L \} ~.
\end{align*}

Before reaching the Markov order, one has not discovered all process
statistics and, so, new symbols appear more surprising than they otherwise 
would. Mathematically, this is formulated through a lower bound:
\begin{align*}
	H[\MS_L | \MS_{0:L}] \geq \hmu ~,
\end{align*}
for all $L$.  Since the block entropy curve for Markovian processes reaches its
asymptote at $L=R$ and since the linear asymptote has slope equal to the entropy
rate, we know that Markov processes attain the lower bound whenever $L > R$:
$H[\MS_L | \MS_{0:L}] = \hmu$.

Finally, since the block entropy is concave and nondecreasing, it is
bounded above by its linear asymptote.  This naturally leads to a concave,
nondecreasing lower bound estimate for the excess entropy:
\begin{align*}
	\EE(L) \equiv H[\MS_{0:L}] - \hmu L~.
\end{align*}
Thus, $\EE(L) \leq \EE(L+1) \leq \EE$ and $\lim_{L\to\infty} \EE(L) = \EE$.

\subsection{State-block entropy}

The state-block entropy $\AltSBE{L}$ is the joint uncertainty one has in a
presentation's internal state $\AlternateState$ and the block of symbols
immediately following. Its behavior is generally nontrivial, but when
restricted to \eMs, its behavior is simple~\cite{Crut10a}. In that case,
it refers to the process's unknown causal state $\CausalState_0$
and is denoted $\SBE{L}$.

Its simplicity is a direct consequence of the causal states' efficient encoding
of the past.  To see this, note that differences in the state-block
entropy curve, the rate at which it grows with block length, are constant:
\begin{align*}
	\SBE{L+1} - \SBE{L}
	&= H[\MS_L | \CausalState_0 \MS_{0:L}] \\
	&= H[\MS_L | \CausalState_{0:L} \MS_{0:L}]\\
	&= H[\MS_L | \CausalState_L]\\
	&= \hmu ~.
\end{align*}
Here, we used the unifilarity property of \eMs:
$H[\CausalState_{L+1} | \CausalState_L, \MS_L] = 0$.
So, given the causal state $\CausalState_0$, the block $\MS_{0:L}$ of
symbols immediately following it determines each causal state along the way
$\CausalState_{0:L}$. Since causal states are sufficient statistics for
prediction, the future symbol $\MS_L$ depends only on the most recent causal
state $\CausalState_L$ and, finally, the optimality of \eMs\ means
that the next symbol can be predicted at the entropy rate $\hmu$.

In other words, the state-block entropy that employs a process's
\eM\ presentation is a straight line with with slope $\hmu$ and
$y$-intercept $\SBE{0} = H[\CausalState_0] \equiv \Cmu$.
Note that $\SBE{L} \geq \BE{L}$ with equality if and only if
$H[\CausalState_0 | \MS_{0:L}] = 0$. Since conditioning never
increases uncertainty, these two block-entropy curves remain equal from
that point onward.
This necessarily implies that they tend to the same asymptote. So,
if the state-block entropy curve ever equals the block entropy curve, then
the $y$-intercepts of each curve must also be equal: $\Cmu = \EE$.  Stated
differently, the two curves meet if and only if the process has $\PC = 0$.

\subsection{Block-state entropy}
\label{sec:bse}

Finally, we consider the block-state entropy $\AltBSE{L}$, a measure of the
joint uncertainty one has in a block of symbols and the presentation's
subsequent internal state.  Once again, our interest here is with \eMs,
and so we consider $\BSE{L}$. Unlike the state-block entropy, however, the
behavior of this entropy is nontrivial.  We recall a number of its properties
and also establish the equivalence of the cryptic order definitions
given in Refs.~\cite{Maho09a,Crut10a}. Then, we provide a detailed proof of
its convexity, as this does not appear previously.

The block-state entropy begins at $\Cmu$ when $L = 0$. As $L$ increases, the
curve is nondecreasing and tends, from above, to the same linear asymptote
as the block entropy: $\EE + \hmu L$.  Since the state-block entropy is
 $\Cmu + \hmu L$ and since $\Cmu \geq \EE$, we see that the
state-block entropy curve is greater than or equal to the block-state entropy:
$\SBE{L} \geq \BSE{L}$. Equality for $L>0$ occurs if and only if the process
has $\Cmu = \EE$ or, equivalently, $\PC = 0$ and, then, the curves
are equal for all $L$.

Similarly, the block-state entropy is greater than or equal to the block
entropy: $\BSE{L} \geq \BE{L}$. We have equality if and only if
$H[\CausalState_L | \MS_{0:L}] = 0$.  Recall, the smallest such $L$ is the
Markov order $R$. So, the block-state entropy equals the block entropy
only at the Markov order.  Further, once the curves are equal, they remain
equal:
\begin{align*}
	\BSE{L} = \BE{L} \Rightarrow \BSE{L+1} = \BE{L+1}.
\end{align*}
This can shown by individually expanding both $\BSE{L+1}$ and $\BE{L+1}$ to
$\BE{L} + \hmu$. The interpretation is that the two curves become equal only
at the Markov order and only after both curves have reached their linear
asymptotes.

Reference~\cite{Crut10a} defined the cryptic order as the minimum $L$ for which
the block-state entropy reaches its asymptote. This is in contrast
to the definition provided here and also in Ref.~\cite{Maho09a}, which
defines the cryptic order as the minimum $L$ for which
$H[\CausalState_L | \MS_{0:}] = 0$. We now establish the equivalence of
these two definitions.

\begin{The}
\begin{align}
H[\CausalState_L | \MS_{0:}] = 0 \: \iff \: \BSE{L} = \EE + \hmu L ~.
\end{align}
\end{The}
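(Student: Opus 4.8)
The plan is to reduce the stated equivalence to a single clean identity,
\[
\BSE{L} = \EE + \hmu L + H[\CausalState_L | \MS_{0:}] ~,
\]
from which the theorem is immediate: since $H[\CausalState_L | \MS_{0:}] \ge 0$, the block-state entropy equals its linear asymptote $\EE + \hmu L$ exactly when $H[\CausalState_L | \MS_{0:}] = 0$. So the work is entirely in establishing this master identity.

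First I would peel off the block entropy. By the chain rule, $\BSE{L} = H[\MS_{0:L}] + H[\CausalState_L | \MS_{0:L}] = \BE{L} + H[\CausalState_L | \MS_{0:L}]$, and using $\BE{L} = \EE(L) + \hmu L$ (the definition $\EE(L) \equiv \BE{L} - \hmu L$ given earlier) this becomes $\EE(L) + \hmu L + H[\CausalState_L | \MS_{0:L}]$. Thus the whole problem reduces to controlling the finite-length term $H[\CausalState_L | \MS_{0:L}]$, in which the state is inferred only from the block $\MS_{0:L}$ that precedes it.

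The crux is the relation
\[
H[\CausalState_L | \MS_{0:L}] - H[\CausalState_L | \MS_{0:}] = \EE - \EE(L) ~,
\]
i.e. the extra information the far future $\MS_{L:}$ contributes about $\CausalState_L$ beyond the block $\MS_{0:L}$ is exactly the ``missing'' excess entropy $\EE - \EE(L)$. I would prove this by writing the left side as the conditional mutual information $I[\CausalState_L ; \MS_{L:} | \MS_{0:L}]$ and evaluating it with the sufficiency (prescience) of causal states: since $\MS_{L:}$ is conditionally independent of $\MS_{:L}$ given $\CausalState_L$, one gets $I[\CausalState_L ; \MS_{L:} | \MS_{0:L}] = H[\MS_{L:} | \MS_{0:L}] - H[\MS_{L:} | \CausalState_L]$. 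Separately, expanding $\EE(L) = I[\MS_{:0}; \MS_{0:L}]$ (which holds because $H[\MS_{0:L} | \MS_{:0}] = \hmu L$ by stationarity) gives $\EE - \EE(L) = I[\MS_{:0}; \MS_{L:} | \MS_{0:L}] = H[\MS_{L:} | \MS_{0:L}] - H[\MS_{L:} | \MS_{:L}]$, and sufficiency again identifies $H[\MS_{L:} | \MS_{:L}] = H[\MS_{L:} | \CausalState_L]$. The two expressions coincide, establishing the relation; substituting it into the chain-rule expansion cancels the $\EE(L)$ terms and produces the master identity.

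The main obstacle is rigor with the semi-infinite objects: the entropies $H[\MS_{L:} | \cdots]$ may be infinite, so the mutual-information manipulations should be carried out on finite future windows $\MS_{L:L+M}$---where the sufficiency factorizations $H[\MS_{L:L+M} | \MS_{0:L}, \CausalState_L] = H[\MS_{L:L+M} | \CausalState_L]$ and $H[\MS_{L:L+M} | \MS_{:L}] = H[\MS_{L:L+M} | \CausalState_L]$ hold for every $M$---and then passed to the limit $M \to \infty$, checking that the relevant (finite, monotone) conditional mutual informations converge and that $\EE(L) \to \EE$. As a sanity check, at $L = 0$ the master identity recovers $\BSE{0} = \Cmu = \EE + \PC$, matching the definition of crypticity.
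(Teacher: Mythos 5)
Your proof is correct, but it takes a genuinely different route from the paper's. The paper establishes the theorem as a chain of six equivalences: it invokes Thm.~1 of Ref.~\cite{Crut10a} to trade $H[\CausalState_L | \MS_{0:}] = 0$ for $H[\CausalState_0 | \MS_{0:}] = H[\CausalState_0 | \MS_{0:L}, \CausalState_L]$, rewrites both sides as mutual informations, applies prescience in the form $\EE = I[\CausalState_0; \MS_{0:}]$, and finally uses unifilarity ($H[\CausalState_L | \CausalState_0, \MS_{0:L}] = 0$ and $H[\MS_{0:L} | \CausalState_0] = \hmu L$) to arrive at $\BSE{L} = \EE + \hmu L$. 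You instead prove the exact identity
\begin{align*}
\BSE{L} = \EE + \hmu L + H[\CausalState_L | \MS_{0:}] ~,
\end{align*}
of which the theorem is an immediate corollary by nonnegativity of conditional entropy. Your key lemma, $H[\CausalState_L | \MS_{0:L}] - H[\CausalState_L | \MS_{0:}] = \EE - \EE(L)$, uses only the chain rule, stationarity (giving $H[\MS_{0:L} | \MS_{:0}] = \hmu L$ and hence $\EE(L) = I[\MS_{:0}; \MS_{0:L}]$), causal shielding, and the fact that $\CausalState_L$ is a function of the past; notably you never invoke unifilarity or the external theorem, whereas the paper leans on both. What your route buys is a stronger, quantitative statement---the vertical gap between the block-state entropy curve and its asymptote at each $L$ is exactly the residual state uncertainty $H[\CausalState_L | \MS_{0:}]$, which recovers $\Cmu = \EE + \PC$ at $L = 0$ and makes the monotone approach of $\BSE{L}$ to its asymptote transparent---together with a self-contained argument. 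What it costs is the analytic care you correctly flag as essential rather than optional: the conditional mutual informations involving $\MS_{L:}$ must be defined through finite windows $\MS_{L:L+M}$ and monotone limits, since the raw semi-infinite entropies $H[\MS_{L:} | \cdot\,]$ diverge; the paper's route, which works with $I[\CausalState_0; \MS_{0:}]$ against the finite-entropy variable $\CausalState_0$, avoids most of that bookkeeping.
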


\begin{proof}
\begin{align}
&\hspace{-.15in} H[\CausalState_L | \MS_{0:}] = 0
\label{eq:crypticequiv1}\\
&\iff H[\CausalState_0 | \MS_{0:}] = H[\CausalState_0 | \MS_{0:L}, \CausalState_L]
\label{eq:crypticequiv2}\\
&\iff I[\CausalState_0; \MS_{0:}] = I[\CausalState_0; \MS_{0:L}, \CausalState_L]
\label{eq:crypticequiv3}\\
&\iff \EE = H[\MS_{0:L}, \CausalState_L]
			- H[\MS_{0:L}, \CausalState_L | \CausalState_0]
\label{eq:crypticequiv4}\\
&\iff H[\MS_{0:L}, \CausalState_L] \nonumber \\
	& \quad\qquad = \EE + H[\CausalState_L | \CausalState_0, \MS_{0:L}]
	      + H[\MS_{0:L} | \CausalState_0]
\label{eq:crypticequiv5}\\
&\iff H[\MS_{0:L}, \CausalState_L] = \EE + \hmu L ~.
\label{eq:crypticequiv6}
\end{align}
The step from Eq.~\eqref{eq:crypticequiv1} to Eq.~\eqref{eq:crypticequiv2}
follows from Thm.~1 of Ref.~\cite{Crut10a}.  In moving from
Eq.~\eqref{eq:crypticequiv3} to Eq.~\eqref{eq:crypticequiv4}, we used the
prescience of causal states $\EE = I[\CausalState_0; \MS_{0:}]$~\cite{Shal98a}.
Finally, Eq.~\eqref{eq:crypticequiv5} leads to Eq.~\eqref{eq:crypticequiv6}
using unifilarity of
\eMs\ ($H[\CausalState_L | \CausalState_0, \MS_{0:L}] = 0$) and that they
allow for prediction at the process entropy rate:
$H[\MS_{0:L} | \CausalState_0] = \hmu L$.
\end{proof}

We obtain estimates for the crypticity $\PC$ by considering the difference
between the state-block and block-entropy curves:
\begin{align}
\PC(L) & \equiv \SBE{L} - \BSE{L} \label{eq:PCL}\\
       & = \hmu L - H[\MS_{0:L} | \CausalState_L] \label{eq:PCL2}~.
\end{align}
Ref.~\cite{Maho09a} showed that this approximation limits from below in a
nondecreasing manner to the process crypticity: $\PC(L) \to \PC$ and
$\PC(L) \leq \PC(L+1) \leq \PC$.
This also provides an upper-bound estimate of the excess entropy:
\begin{align*}
	\EE \leq \Cmu - \PC(L) ~.
\end{align*}
Combined with the lower-bound estimate the block entropy provides, one
can be confident in the estimates of excess entropy.

The retrodictive error $H[\MS_{0:L} | \CausalState_L]$ is the difference
of the block-state entropy from the statistical complexity. It is also the
difference of $\PC(L)$ from $\hmu L$.  Furthermore, it follows from
Ref.~\cite{Crut10a} that the asymptotic retrodiction rate~\cite{Crut98d} 
is equal to the process entropy rate:
\begin{align*}
	\lim_{L\to\infty} \frac{H[\MS_{0:L} | \CausalState_L]}{L} = \hmu ~.
\end{align*}
In a sense, this describes short-term retrodiction. As we will see in a moment,
order-$R$ spin-chains are a class of processes that have no retrodiction error
for a full $R$-block. The opposite class, in this sense, consists of processes
with $\PC=0$---that is, the co-unifilar processes. These immediately begin
retrodiction at the optimal rate, which is $\hmu$.

Finally, we establish the convexity of the block-state entropy,
which appears to be new.

\begin{The}
$H[\MS_{0:L} \CausalState_{L}]$ is convex upwards in $L$.
\end{The}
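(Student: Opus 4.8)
The plan is to establish convexity in the discrete sense: to show that the second difference $\BSE{L+1} - 2\,\BSE{L} + \BSE{L-1}$ is nonnegative, equivalently that the first differences $\Delta(L) := \BSE{L+1} - \BSE{L}$ are nondecreasing in $L$. First I would extract a clean formula for $\Delta(L)$. Writing $\BSE{L} = \Cmu + H[\MS_{0:L} \mid \CausalState_L]$ (using stationarity, $H[\CausalState_L] = \Cmu$) and then applying stationarity to shift $H[\MS_{0:L+1} \mid \CausalState_{L+1}]$ down by one index, the chain rule collapses everything to $\Delta(L) = H[\MS_{-1} \mid \MS_{0:L}\,\CausalState_L]$. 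Shifting indices once more, the desired monotonicity $\Delta(L) \ge \Delta(L-1)$ becomes
\begin{align*}
H[\MS_0 \mid \MS_{1:L+1}\,\CausalState_{L+1}] \ge H[\MS_0 \mid \MS_{1:L}\,\CausalState_L] ~.
\end{align*}

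The conceptual heart of the argument—and where I expect the main obstacle—is that this inequality runs \emph{against} the naive ``conditioning cannot increase entropy'' heuristic: the left-hand side conditions on strictly more symbols (the extra $\MS_L$) and on a later state. The two conditioning sets are not nested, however, because unifilarity makes $\CausalState_{L+1}$ a function of $(\CausalState_L, \MS_L)$ but \emph{not} conversely (the converse would be co-unifilarity, i.e. $\PC = 0$). The maneuver I would use to tame this is to enlarge both conditioning sets to the entire future. Using the Markov shielding property of causal states, $\MS_{L:} \perp \MS_{:L} \mid \CausalState_L$, the ``tail'' symbols beyond each block are irrelevant once the corresponding state is known, so
\begin{align*}
H[\MS_0 \mid \MS_{1:L}\,\CausalState_L] &= H[\MS_0 \mid \MS_{1:}\,\CausalState_L] ~, \\
H[\MS_0 \mid \MS_{1:L+1}\,\CausalState_{L+1}] &= H[\MS_0 \mid \MS_{1:}\,\CausalState_{L+1}] ~.
\end{align*}

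With the full future $\MS_{1:}$ fixed on both sides, I can finish by data processing. Conditioned on $\MS_{1:}$, unifilarity makes each $\CausalState_{j+1}$ a deterministic function of $\CausalState_j$ (the emitted symbol $\MS_j$ is known), so $\CausalState_{L+1}$ is a coarsening of $\CausalState_L$ and can carry no more information about $\MS_0$: $I[\MS_0; \CausalState_{L+1} \mid \MS_{1:}] \le I[\MS_0; \CausalState_L \mid \MS_{1:}]$. Subtracting each of these from $H[\MS_0 \mid \MS_{1:}]$ reverses the inequality into exactly the conditional-entropy statement above, which completes the proof. I would expect the only delicate points in the write-up to be justifying the two shielding reductions carefully—identifying precisely which variables lie in the past of which state—and stating the deterministic-coarsening step cleanly; the remainder is routine bookkeeping with the chain rule and stationarity.
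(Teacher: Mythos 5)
Your proof is correct, and it shares the paper's opening reduction: both arguments turn convexity into monotonicity of the first difference $\Delta(L) = H[\MS_{-1} \mid \MS_{0:L}\,\CausalState_L]$, i.e., into the inequality $H[\MS_0 \mid \MS_{1:L+1}\,\CausalState_{L+1}] \ge H[\MS_0 \mid \MS_{1:L}\,\CausalState_L]$. From there the routes genuinely diverge. The paper stays with finite blocks and sets up a four-variable information diagram, using unifilarity and causal-state shielding to argue that certain sigma-algebra atoms vanish ($\beta = \delta = \epsilon = 0$), so that the inequality reduces to nonnegativity of the conditional mutual information $\gamma = I[\MS_{-1};\CausalState_{L-1} \mid \MS_{0:L}\,\CausalState_L]$. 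You instead enlarge both conditioning sets to the infinite future via shielding (a weak-union step), after which unifilarity makes $\CausalState_{L+1}$ a deterministic coarsening of $\CausalState_L$ given $\MS_{1:}$, and the conditional data-processing inequality finishes the job. The two certificates are in fact cousins: your slack term is $I[\MS_0;\CausalState_L \mid \MS_{1:}\,\CausalState_{L+1}]$, the infinite-future analogue of the paper's $\gamma$. What your route buys: it avoids the atom bookkeeping entirely, and it also sidesteps the paper's somewhat awkward appeal to an ``$L \to \infty$ limit'' of the four entropy-rate relations (those relations in fact hold exactly at every finite $L$ by shielding plus unifilarity, which is what the finite-$L$ convexity claim actually requires); your argument is exact throughout. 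What it costs: you condition on the infinite future $\MS_{1:}$, so quantities such as $H[\MS_0 \mid \MS_{1:}\,\CausalState_L]$ must be understood as limits of finite-block conditionings---standard for stationary processes, and consistent with the paper's own use of $H[\CausalState_0 \mid \MS_{0:}]$ in defining $\PC$, but it is exactly the delicate point you flagged and should be stated explicitly, along with the observation that the single-step map $\CausalState_{L+1} = f(\CausalState_L, \MS_L)$ only needs $\MS_L$ to lie in $\MS_{1:}$, i.e., $L \ge 1$, which is all the second-difference statement requires.
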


\begin{proof}
Convexity here means:
\begin{align*}
H[\MS_{0:L+1} \CausalState_{L+1}]
  & - H[\MS_{0:L} \CausalState_{L}] \\
  & \ge H[\MS_{0:L} \CausalState_{L}]
    - H[\MS_{0:L-1} \CausalState_{L-1}] ~.
\end{align*}
Stationarity gives us:
\begin{align*}
H[\MS_{-1:L} \CausalState_{L}]
  & - H[\MS_{0:L} \CausalState_{L}] \\
  & \ge H[\MS_{-1:L-1} \CausalState_{L-1}]
  	- H[\MS_{0:L-1} \CausalState_{L-1}] ~.
\end{align*}
Simplifying, we have:
\begin{align*}
H[\MS_{-1} | \MS_{0:L} \CausalState_{L}]
  & \ge H[\MS_{-1} | \MS_{0:L-1} \CausalState_{L-1}] ~.
\end{align*}

We can use the I-diagram of Fig.~\ref{fig:4Variable_IDiagram} to help
understand this last convexity statement. There, it translates into:
\begin{align*}
\alpha + \gamma &\ge \alpha + \beta ~,
\end{align*}
or, since $\alpha \geq 0$:
\begin{align}
\gamma &\ge \beta ~.
\label{eq:ConvexityInIDiagram}
\end{align}

\begin{figure}
\includegraphics{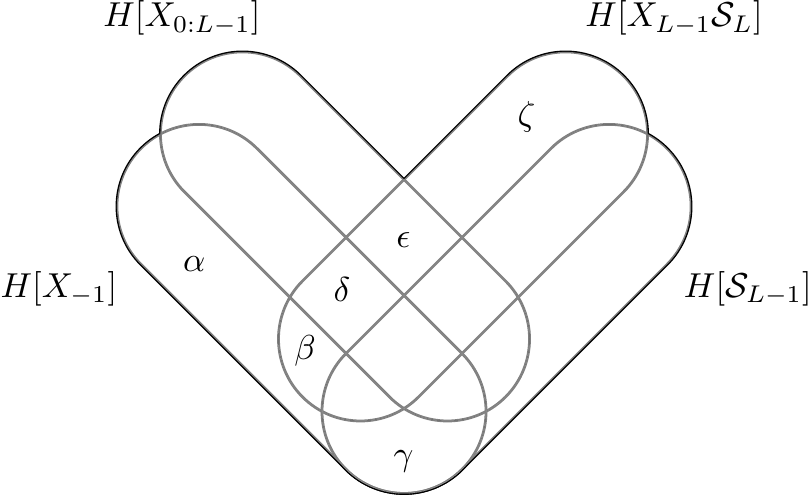}
\caption{Four variable I-diagram for the block-state entropy convexity proof,
  with the needed sigma-algebra atoms appropriately labeled.
  }
\label{fig:4Variable_IDiagram}
\end{figure}

Using the fact that the causal state is an optimal representation of the past,
we have the following expressions that are asymptotically equivalent
to the entropy rate $\hmu$:
\begin{align*}
H[\MS_{L-1} \CausalState_L | \CausalState_{L-1}] &= \beta + \epsilon + \delta + \zeta\\
H[\MS_{L-1} \CausalState_L | \CausalState_{L-1} \MS_{0:L-1}] &= \beta + \zeta\\
H[\MS_{L-1} \CausalState_L | \CausalState_{L-1} \MS_{-1}] &= \epsilon + \zeta \\
H[\MS_{L-1} \CausalState_L | \CausalState_{L-1} \MS_{-1} \MS_{0:L-1}] &= \zeta ~.
\end{align*}
The associations with the sigma-algebra atoms are readily gleaned from the I-diagram.
Note that the various finite-$L$ expressions for the entropy rate rely on the
shielding property of the causal states and also on the \eM's unifilarity.
Taken together in the $L \to \infty$ limit, the four relations yield:
\begin{align*}
\zeta & = \hmu ~\text{and} \\
\beta & = \delta = \epsilon = 0 ~.
\end{align*}
These, in turn, transform the convexity criterion of 
Eq.~(\ref{eq:ConvexityInIDiagram}) into the simple statement that:
\begin{align*}
\gamma \ge 0 ~.
\end{align*}
Since $\gamma =
I[\MS_{-1};\CausalState_{L-1}|\MS_{0:L}
\CausalState_L]$ is a conditional mutual information and,
therefore, positive semidefinite, this establishes that the block-state 
entropy is convex.
\end{proof}

\begin{figure}[ht]
\centering
\includegraphics[width=\linewidth]{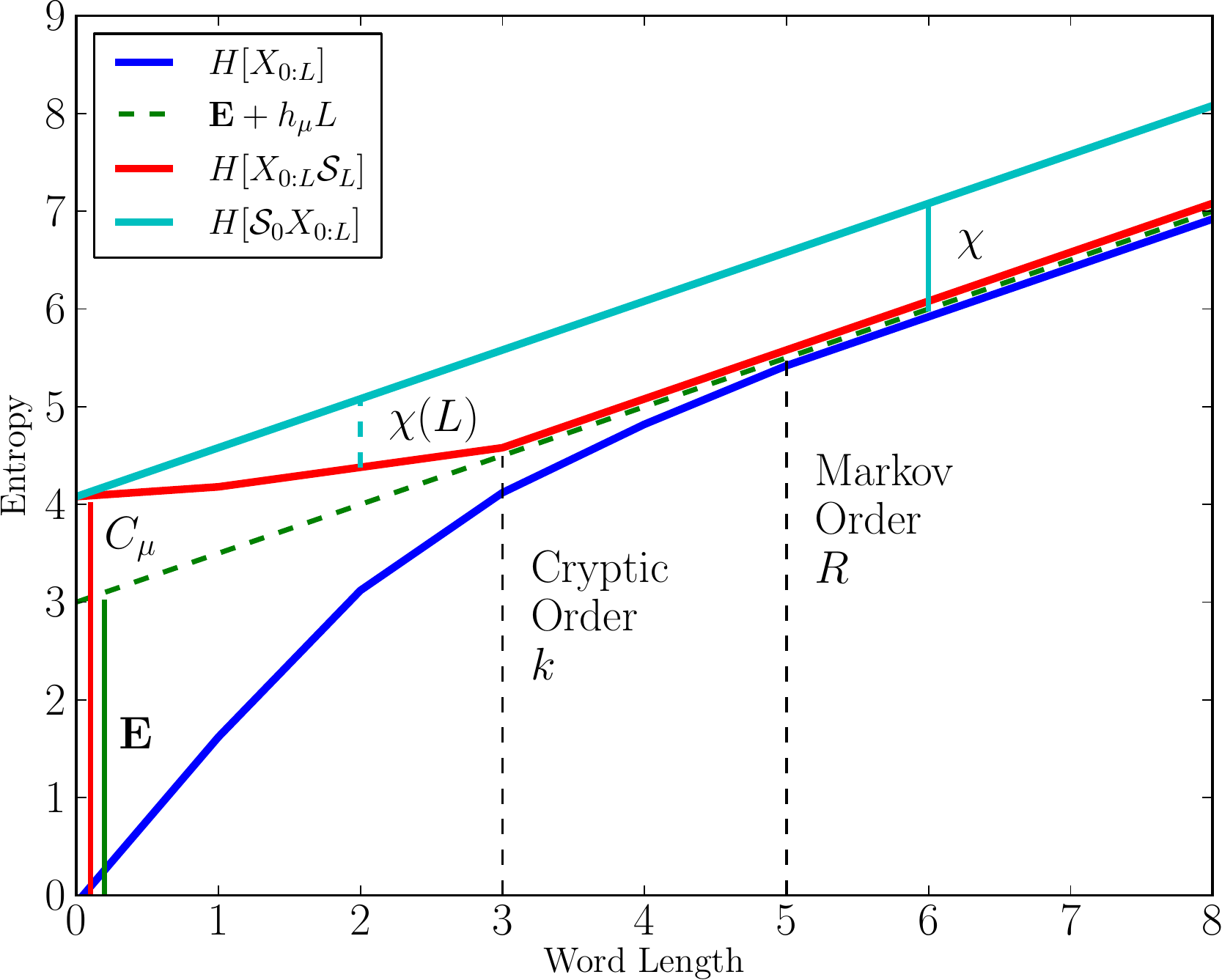}
\caption{The block $H[\MS_{0:L}]$, state-block
  $H[\CausalState_0 \MS_{0:L}]$, and block-state entropy
  $H[\MS_{0:L}\CausalState_0 ]$ curves compared. The sloped dashed
  line is the asymptote $\EE + \hmu L$, to which both the block entropy and
  state-block entropy asymptote. Finite Markov order and finite cryptic order
  are illustrated by the vertical dashed lines that indicate where the
  entropies meet the linear asymptote, respectively.
  The convergence of the crypticity approximation $\PC(L)$ to $\PC$
  is also shown.
  }
\label{fig:EntropyGrowthCurvesCryptic}
\end{figure}

It will help to summarize the point that we have now reached. We used the various
block entropy curves to synthesize much of our information-theoretic viewpoint
of a process into a single representation---that shown in Fig.
\ref{fig:EntropyGrowthCurvesCryptic}. We can amortize the effort to
develop this viewpoint by applying it to a broad class of processes familiar
from statistical mechanics.

\section{Crypticity in Spin Chains}
\label{sec:spinchains}

We first consider a subset of processes drawn from statistical mechanics known
as one-dimensional \emph{spin chains}. (For background, see 
Refs.~\cite{Crut97a,Feld98b}.) They are processes such that 
$H[\MS_{0:\MOrder}] = \Cmu$.
Using this, the simple geometry presented in
Fig.~\ref{fig:EntropyGrowthCurvesCryptic} reveals that:
\begin{align}
  \PC(L) = \begin{cases}
    \hmu L & \quad 0 \leq L \leq \MOrder ~,\\
    \PC & \quad L > \MOrder ~,
  \end{cases}
\end{align}
and this, in turn, implies that $\COrder = \MOrder$. This can also seen through
Eq.~\eqref{eq:PCL2}. Recall, the block-state entropy is nondecreasing and begins
at $\Cmu$.  Since spins chains have $H[\MS_{0:\MOrder}] = \Cmu$, we know that 
the block-state entropy curve for spin chains must remain flat until $L=R$.
Consequently, $H[\MS_{0:L} | \CausalState_L] = 0$ and $\PC(L) = \hmu L$
for $L \leq R$. Notice that 
$H[\MS_{0:\MOrder} | \CausalState_\MOrder]$ not vanishing
gives a way to understand how $\PC(L)$ deviates from linear growth.
That is, the nonlinearity of the approach of $\PC(L)$ to $\PC$ is exactly the
coentropy $H[\MS_{0:L} | \CausalState_L]$.

This property is tantamount to a very simple test to determine if a process is
a spin chain. If one obtains a plot similar to
Fig.~\ref{fig:EntropyGrowthCurvesCryptic} for the process in question, it is a
spin chain if $H[\MS_{0:L}, \CausalState_L]$ goes from $(0, \Cmu)$ flat to
$(\MOrder, \Cmu)$, and then follows $\EE + L \hmu$. That is, the block-state
entropy curve is flat until it reaches its asymptote at $L = \COrder = \MOrder$,
at which point it tracks it.

Furthermore, given (i) the above proof, (ii) the concavity proof from
Sec.~\ref{sec:bse}, and (iii) the fact that $k \le \MOrder$, for a given $\EE$,
$\hmu$, and $\MOrder$ spin chains are seen to be maximally-cryptic 
processes. By this we mean that for all processes with a particular set of 
values for $\EE$, $\hmu$, and $\MOrder$, the process that maximizes $\PC$ is 
a spin chain. This implies that $\Cmu$ is also maximized.

\begin{figure}[h!]
  \centering
  \includegraphics{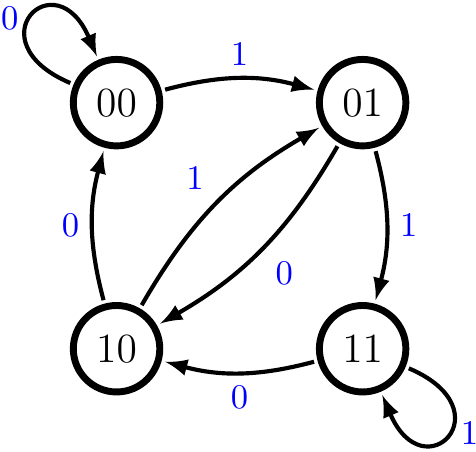}
  \caption{An order-2 Markov spin chain with full support.}
  \label{fig:spinchain}
\end{figure}

\begin{figure}[h!]
  \centering
  \includegraphics{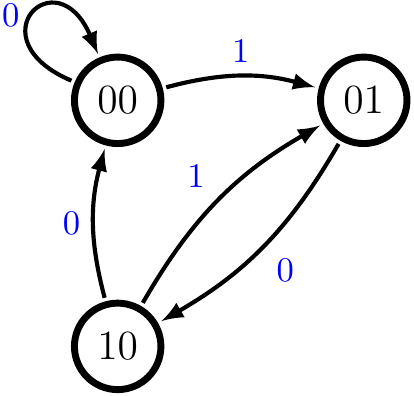}
  \caption{An order-2 Markov spin chain with partial support.}
  \label{fig:dimspinchain}
\end{figure}

\begin{figure}
  \centering
  \includegraphics{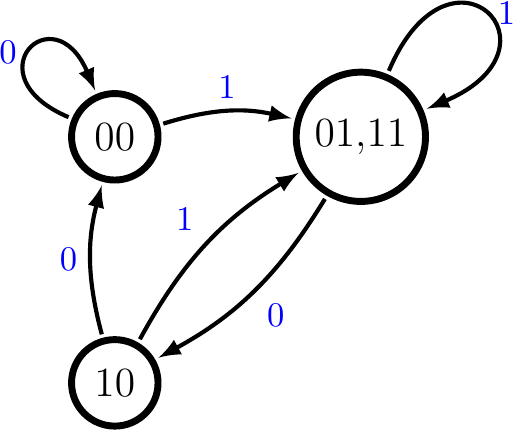}
  \caption{An order-2 Markov process, but not a spin chain.}
  \label{fig:nonspinchain}
\end{figure}

Figures~\ref{fig:spinchain} and \ref{fig:dimspinchain} show two order-$2$ Markov
spin chains. The first is a full-support order-$2$ Markov chain, while the
second has only partial support. In fact, the latter process has the Golden
Mean support consisting of all bi-infinite sequences that do not
contain consecutive $0$s.

Figure~\ref{fig:nonspinchain} gives an \eM\ of similar structure to the spin
chains just examined and, while it is also an order-$2$ Markov process, it is
not a spin chain. The reason is that one causal state (labeled ``$01,11$'') is
induced by two words: $01$ and $11$. This means that the correspondence between 
inducing-words and causal states is broken. It is no longer a spin chain.

We close this section with a number of open questions about spin chains. The
first two regard the structure of spin chains. If an \eM\ is a subgraph of an
order-$\MOrder$ Markov skeleton, then is it a spin chain? That is, does the
removal of an edge from a spin chain produce another spin chain? The intuition
behind this question is straightforward: Removing transitions disallows blocks,
but it would not cause any block to be associated with a different state. A
related question asks if all spin chains are of this form.

The next two questions regard the transformation from a spin chain to any other
process and vice versa. First, can any order-$\MOrder$ Markov, order-$\COrder$
cryptic \eM\ be obtained by starting with an order-$\MOrder$ Markov skeleton,
reducing some probabilities to zero and adjusting others to cause state 
merging? Also, given an order-$\MOrder$ Markov, order-$\COrder$ cryptic \eM, 
we can break the existing degeneracy so that $H[\MS_{0:\MOrder}] = \Cmu$. How 
does the nonspin chain we started with compare with the spin chain we end up with?

\section{Geometric Constraints}
\label{sec:geometric}

The geometry of the block entropy convergence illustrated in
Fig.~\ref{fig:EntropyGrowthCurvesCryptic} can be exploited. In
particular, as we will now show, a variety of constraints leads to further
results on the allowed convergence behaviors the block and block-state entropy
curves can express. Figure \ref{fig:BlockAndBlockStateEntropy}
depicts these results graphically.

\begin{figure}[h!]
  \centering
  \includegraphics[width=\linewidth]{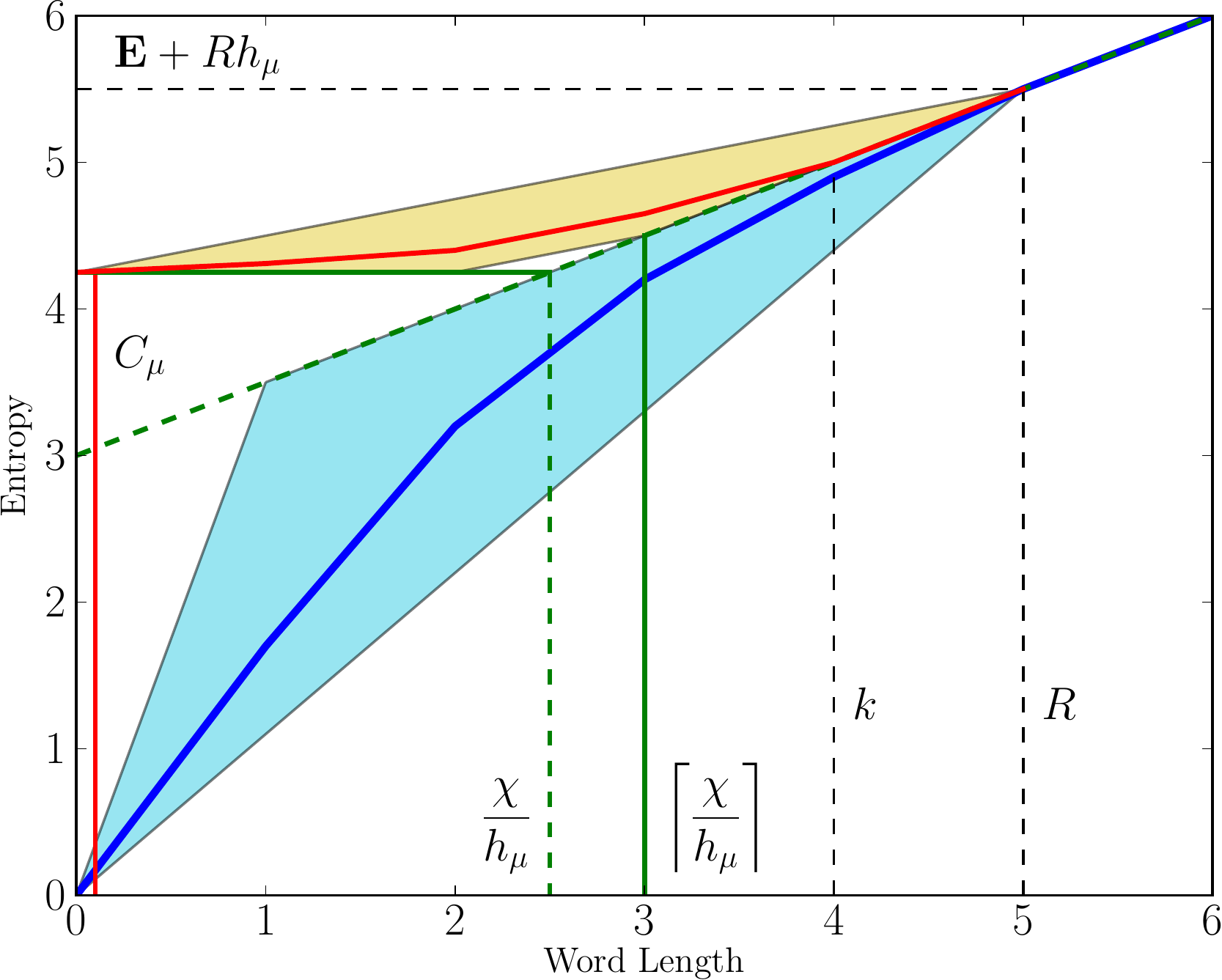}
\caption{Constraints on entropy convergence, illustrated for a process that
  is order-$5$ Markov and order-$4$ cryptic. The blue region circumscribes
  where the block entropy curve can lie; the tan, where the block-state
  entropy may be. These and the discreteness of $L$ lead to restrictions
  on allowed cryptic orders as well.
  }
\label{fig:BlockAndBlockStateEntropy}
\end{figure}

First, given the block entropy's concavity and it's asymptote, one sees that
the block entropy curve is contained within the triangle described by
$\{ (0, 0), (0, \EE), (R, H[\MS_{0:R}]) \}$. We also know that the block
entropy cannot grow faster than $H[\MS_0]$ and this excludes the triangle
$\{ (0, 0), (0, \EE), (1, H[\MS_0]) \}$. The resulting allowed region is
shown in light blue in Fig. \ref{fig:BlockAndBlockStateEntropy}.

Second and similarly, the block-state entropy's own properties require it to
be within a triangle described by
$\{ (0, \Cmu), (\frac{\PC}{\hmu}, \Cmu), (R, H[\MS_{0:R}]) \}$.

Third, since the entropy functions are defined for discrete values of word
length $L$, we can go a little further than these observations. The
block-state entropy cannot intersect the asymptote $\EE + \hmu L$ at a
noninteger $L$. Therefore the small triangle
$\{ (\lfloor\frac{\PC}{\hmu}\rfloor, \Cmu), (\frac{\PC}{\hmu}, \Cmu),
(\lceil\frac{\PC}{\hmu}\rceil, \EE + \lceil\frac{\PC}{\hmu}\rceil \hmu) \}$
is excluded. The resulting allowed trapezoid is displayed in tan in
Fig. \ref{fig:BlockAndBlockStateEntropy}.

Fourth, recalling results on the block-state entropy, this exclusion means that
processes with $\Cmu \neq \EE + \hmu k$, for some $k$, must have a degree of
nonoptimal retrodiction. In short, they are prevented from being spin chains.

Finally, given a process that has cryptic order $k$, we see that
$\Cmu \leq \EE + \hmu k$. A more detailed result then says that
$\Cmu = \EE + \hmu k$ if and only if $H[\MS^k] = \Cmu$. Moreover, it is
Markov order-$k$; that is, it is a spin chain.

\begin{figure*}
  \centering
  \includegraphics[width=1.0\textwidth]{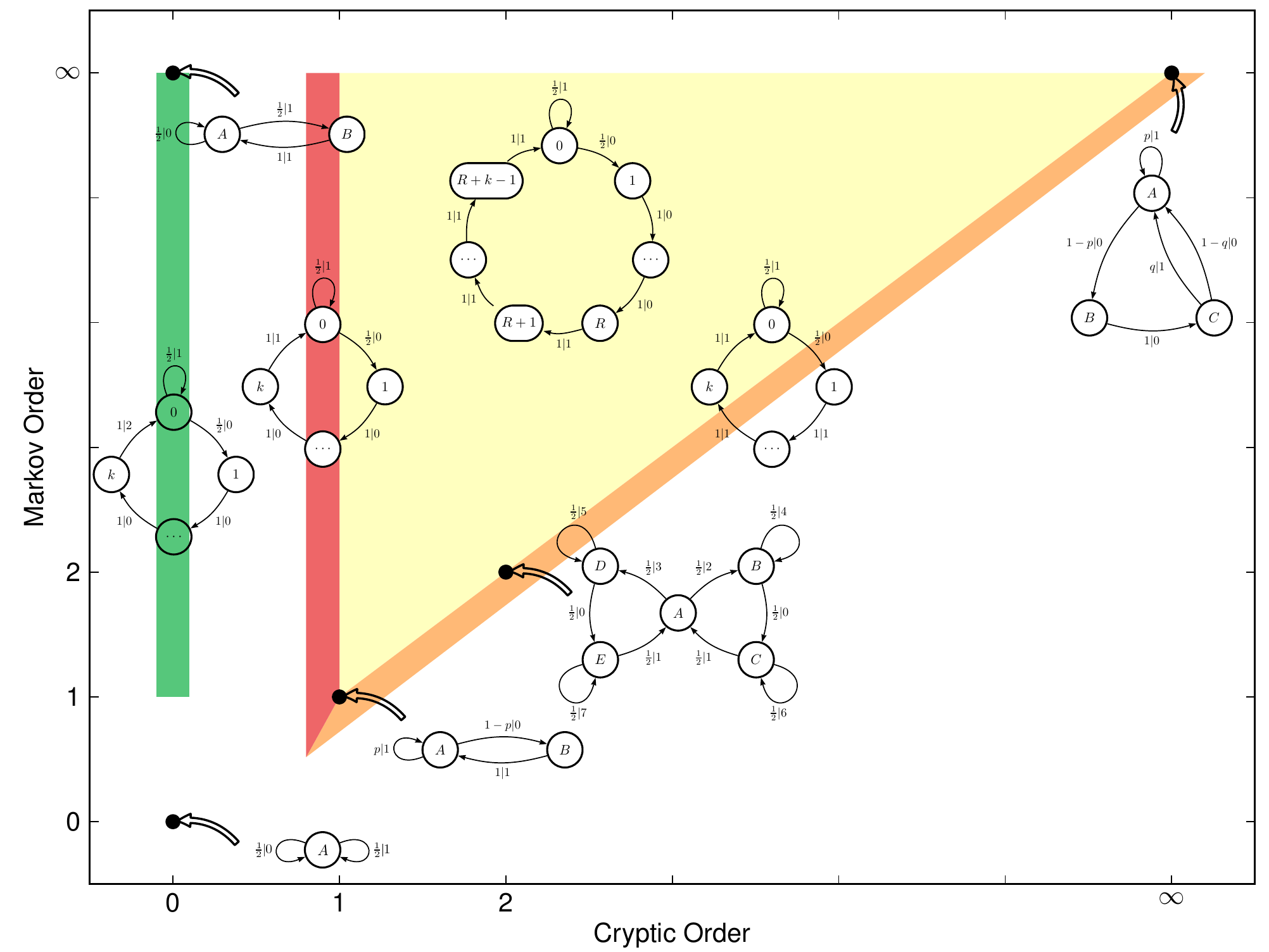}
\caption{The crypticity-Markovity roadmap for finite-state stationary processes:
  The range of possible
  Markov and cryptic orders, illustrated by a sample of processes depicted
  by their \eMs. Lower left: The Fair Coin Process and all other IID processes.
  Upper left: The $\infty$-cryptic Even Process. Upper right: The Nemo Process.
  Left vertical (green) line: The co-unifilar processes.
  }
\label{fig:CrypticMarkovRoadmap}
\end{figure*}

\section{The Cryptic Markovian Zoo}
\label{sec:process-zoo}

It turns out that there exist finite-state processes with all combinations of 
Markov and cryptic order; subject, of course, to the constraint that
$R \geq k$. These range from the zero structural complexity independent,
identically distributed processes, for which $R = 0$ and $k = 0$, to
few-state processes where either or both are infinite. (For a complementary
and exhaustive survey see Ref. \cite{Jame10a}.) In practice, given what we
now know about these properties, it is not difficult to design a variety of
processes that fulfill a given specification.

Also noteworthy is how the introduction of the new crypticity ``coordinate''
affects our view of several well studied examples. For instance, the Even
Process \cite{Crut01a} is one of the canonical finite-state, infinite-order
Markov processes. In the past, it was often thought of as representing
both intractability and compactness. Now, though, we see that it is trivial,
being $0$-cryptic. The Golden Mean Process, one of the simplest (order-$1$ 
Markov) subshifts of finite-type studied is now seen as more sophisticated, 
being $1$-cryptic. These and similar explorations naturally lead one to delve
deeper to find extreme examples---such as the Nemo process below---that are infinite
in both cryptic and Markov orders. Again, see Ref. \cite{Jame10a}.

Figure~\ref{fig:CrypticMarkovRoadmap} presents a crypticity-Markovity roadmap for
the space of finite-state processes. Borrowing from the immediately preceding
citations, it also displays a select few processes using their \eMs\ to show
concretely the full diversity of possible Markov and cryptic orders a
finite-state process can possess. The
green bar at $k=0$ consists of all co-unifilar processes. The orange line
contains all processes where the Markov and cryptic orders are identical---a
subset of which are the spin chains. All other processes lie above this line.
The Even Process is in the upper left corner. The Golden Mean
Process (no consecutive $0$s) is in the lower left. The $\infty$-cryptic,
infinite-order Markov Nemo Process is in the upper right corner. Several of
the other prototype \eMs\ depicted illustrate $(R,k)$-parametrized
classes of process for whom the Markov and cryptic orders can be
selected arbitrarily.

\section{Information Diagrams for Stationary Processes}
\label{sec:IDiagrams}

\emph{Information diagrams}, or simply \emph{I-diagrams}, are an important
analysis tool in using information theory to analyze multivariate stochastic
processes \cite{Yeun91a}. They are particularly useful when working with
processes and, as we have already seen here, give a good deal of insight when 
the \eM\ presentation is
employed \cite{Crut08a,Crut08b}.

The essential idea is that there is a one-to-one correspondence between
information-theoretic quantities---mutual information and conditional and
joint entropies---and measurable sets. Constructively, informational
relationships and constraints are depicted via set-theoretic operations:
joint entropies are set unions, conditional entropies correspond to set 
difference, mutual information corresponds to set intersection, and the 
like. The mathematical structure is a sigma
algebra over the process's events (words). The noncomposite sets are the \emph{atoms}
of the sigma algebra and their size is the magnitude of the corresponding
informational quantities. When depicted graphically, though, one often ignores
magnitudes and, instead, focuses on the set-theoretic relationships.

Armed with simple and familiar rules, one can often accomplish several
algebraic calculational steps on compound entropy expressions via
a simple I-diagram and a small
description. Perhaps more importantly, I-diagrams afford a visual calculus
that lends a heightened intuition about complicated relationships among
random variables.

Figures \ref{fig:FoliationMarkov} through \ref{fig:SpinChainIDiagram} show how
to make more explicit and intuitive the preceding formal views of entropy
convergence and its relationship to Markovity and crypticity. The two large
circles in each represent the past via $H[\Past]$ and the future via
$H[\Future]$. The excess entropy $\EE = I[\Past;\Future]$, being a mutual
information, is their intersection. The I-diagrams there show the nested
dependence of the various information measures as one increases block size
and so increases the number of random variables. In the general
multivariate case this would lead to an explosion of atoms.
However, due to the nature of processes and the \eM\ itself many
simplifications are possible.
Figures~\ref{fig:FoliationCrypticReqk}-\ref{fig:FoliationCrypticRgegek}
also depict the \eM's causal-state information, $\Cmu = H[\CausalState]$, as
a circle entirely inside the past $H[\Past]$.
This is so, since the causal states are a function of the past.

\begin{figure}[h!]
\centering
\includegraphics[width=\linewidth]{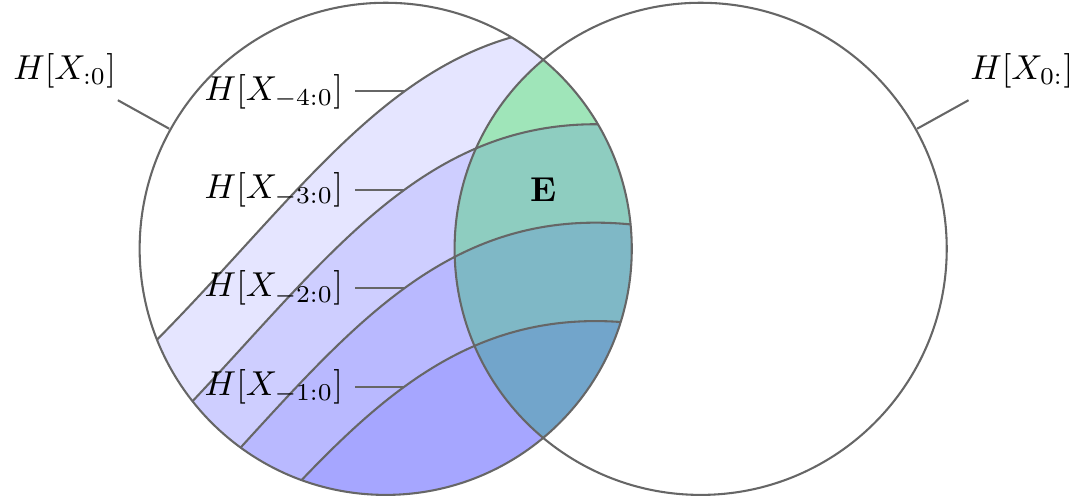}
\caption{Information diagram for an order-$4$ Markov process. Only the four
  most recent history symbols are needed to reduce as much uncertainty in
  the future as using the whole past would.
  }
\label{fig:FoliationMarkov}
\end{figure}

To start with the simplest case, Fig.~\ref{fig:FoliationMarkov} gives the 
I-diagram for an order-$4$ Markov
process. As one expects, only the four most recent history symbols are needed 
to reduce as much uncertainty in the future as using the whole past would.
Equivalently, as soon as the history contains four symbols, all of the shared
information between the past and the future (the excess entropy $\EE$) is captured.

\begin{figure}[h!]
\centering
\includegraphics[width=\linewidth]{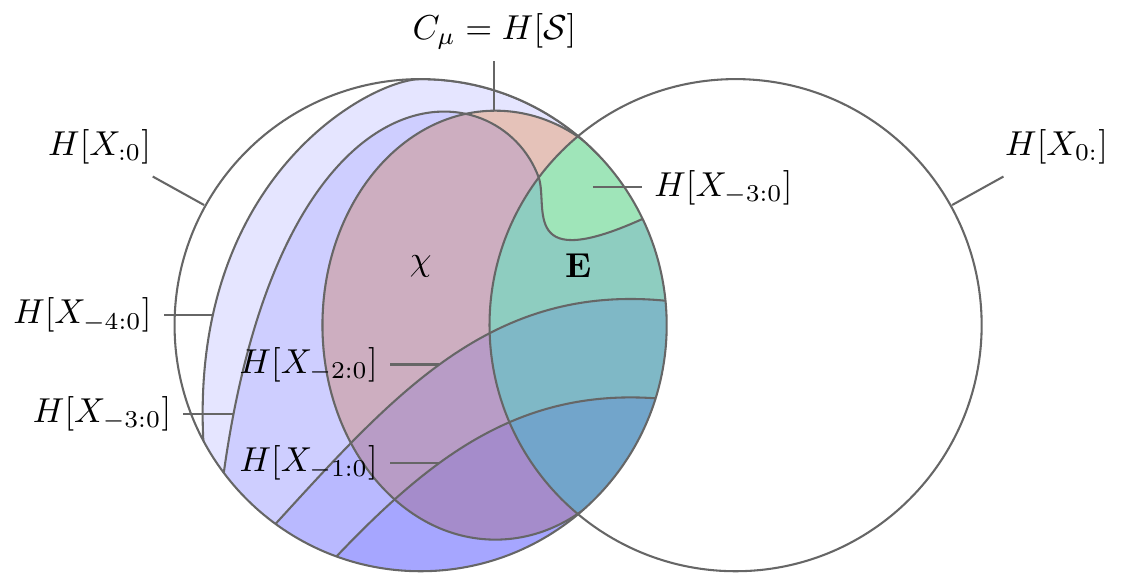}
\caption{Causal state is overlaid onto an I-diagram for an order-$4$ Markov
  process. As drawn, no fewer than $4$ history symbols are required to
  determine the causal state. The causal state, though, does not generally
  determine this length-four history.
  }
\label{fig:FoliationCrypticReqk}
\end{figure}

Figure \ref{fig:FoliationCrypticReqk} then overlays the causal state measure 
$H[\CausalState]$. In this, we see that no fewer than four history symbols are
required to determine the causal state. Importantly, it is now also made
explicit that causal states do not generally determine this history.

\begin{figure}[h!]
\centering
\includegraphics[width=\linewidth]{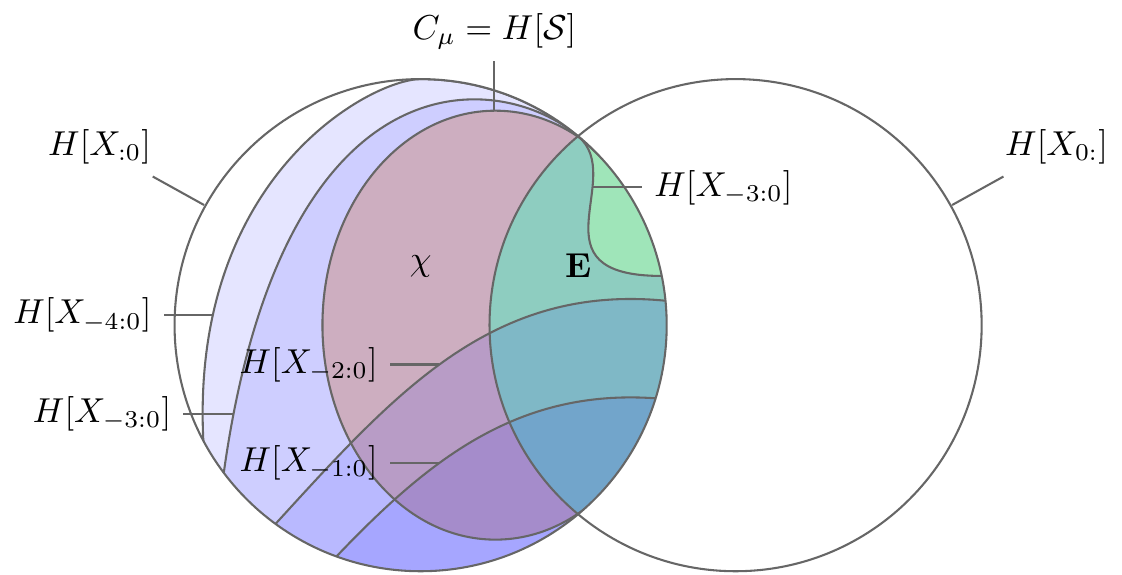}
\caption{An I-diagram for an order-$4$ Markov process, but order-$3$ cryptic.
  Four history symbols are required to determine the state, but only three
  are required if one conditions on the future.
  }
\label{fig:FoliationCryptic}
\end{figure}

Consider now the order-$4$ Markov, order-$3$ cryptic process of Figure
\ref{fig:FoliationCryptic}. As before, four history symbols are required
to determine the state. But, as depicted, only three history symbols are
required if one conditions on the future as well.

\begin{figure}[h!]
\centering
\includegraphics[width=\linewidth]{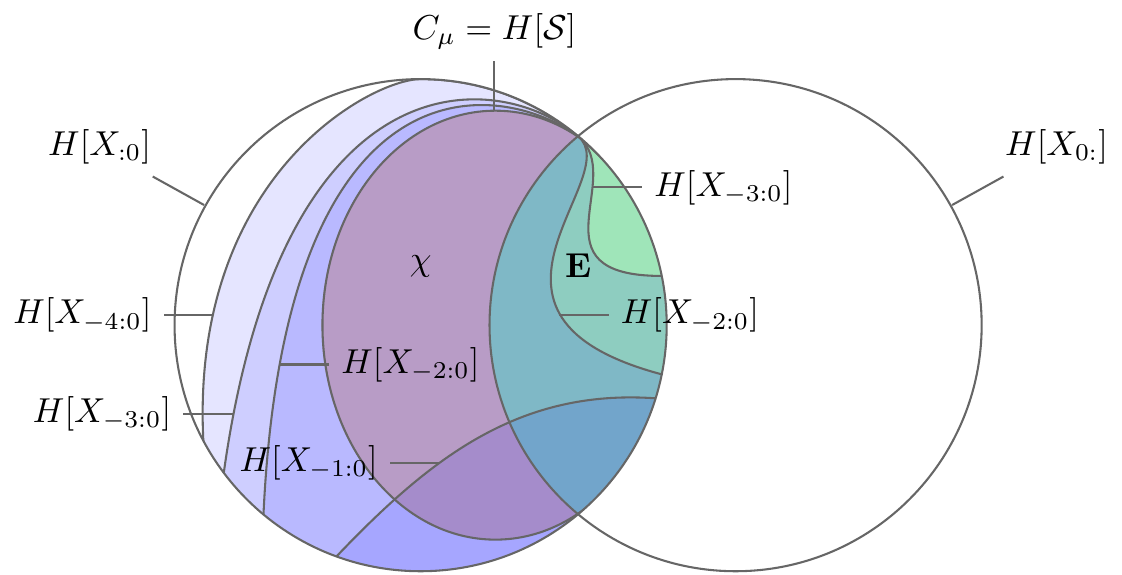}
\caption{The separation between Markov and cryptic orders can be widened:
  A Markov order-$4$, cryptic order-$2$ process.
  }
\label{fig:FoliationCrypticRgegek}
\end{figure}

Figure \ref{fig:FoliationCrypticRgegek} demonstrates how the
difference between Markov and cryptic orders can be increased
without bound. The I-diagram illustrates the sigma-algebra for an
order-$4$ Markov, order-$2$ cryptic process.

\begin{figure}[h!]
\centering
\includegraphics[width=\linewidth]{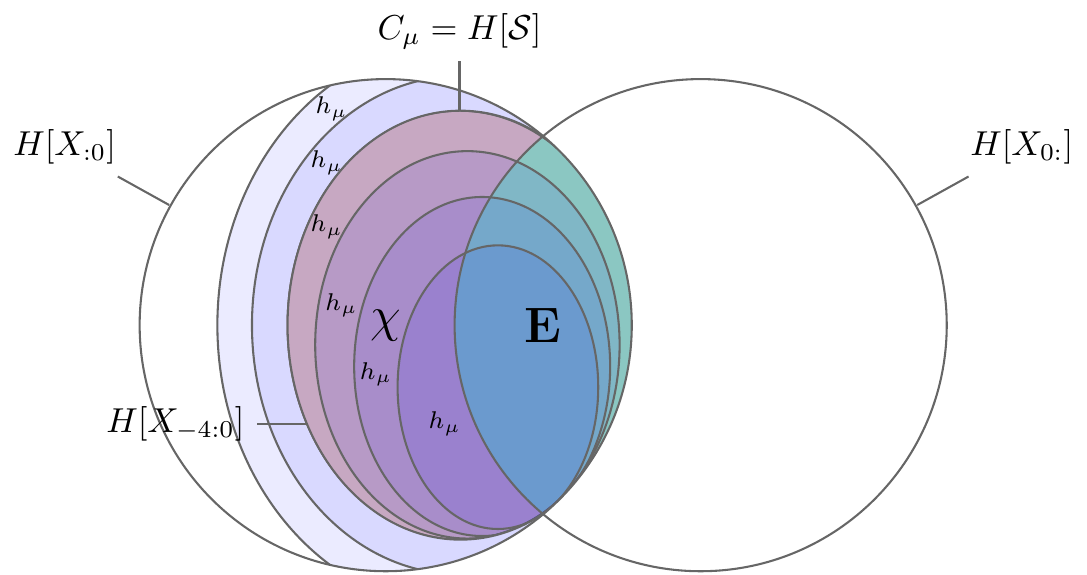}
\caption{The highly regular I-diagram for an order-$4$ spin chain.
  }
\label{fig:SpinChainIDiagram}
\end{figure}

Finally, Fig. \ref{fig:SpinChainIDiagram} gives the I-diagram for an order-$4$
spin chain. Several features of spin chains are clearly rendered in this
I-diagram. First, the shortest history that uniquely determines the state
occurs at length $4$. Specifically, as depicted,
$\min_L : H[\CausalState_0 | \MS_{-L:0}] = 4$. And, at the same time, this
length-$4$ history is itself uniquely determined by the causal state.

\section{Conclusion}

Crypticity, as the difference between a process's stored information and its
observed information, is a key property. The fundamental definitions, 
Eqs.~(\ref{def:Crypticity}) and (\ref{def:CrypticOrder}), though, are not
immediately transparent. However, they do lead to several
interpretations that prove useful in different settings. Given this, our
main goals were to explicate the basic notions behind crypticity and to
motivate various of its interpretations. Along the way, we provided a new
geometric interpretation for cryptic order, established a number of previously
outstanding properties, and illustrated crypticity by giving
a complete analysis for spin chains.

More specifically, using state-paths, we introduced several new interpretations 
of crypticity
that not only helped to explain the basic idea but also suggest future
applications in distributed dynamical systems. We also gave a simple
geometric picture that relates cryptic and Markov orders. We established
the equivalence between co-unifilarity and being $0$-cryptic, as well as the
concavity of the block-state entropy $H[\MS_{0:L} \CausalState_L]$.
We derived several geometric constraints and drew
out their implications for bounds on crypticity. These also led to an
improved bound on Markov order. Presumably, the bounds will help
improve estimates of crypticity and cryptic order, in both the finite
and infinite cases.

To give a sense of the relationship between cryptic and Markov orders we
gave a graphical overview classifying processes in their terms.
In a complementary way, we introduced the technique of foliated
information diagrams to analyze entropy convergence and Markov and
cryptic orders in terms of Shannon information measures and their
now block-length-dependent sigma algebra.

To ground the results in a concrete and familiar class of processes we analyzed
range-$R$ 1D spin chains in detail. We established their Markov order and showed that
the block-state entropy $H[\MS_{0:L} \CausalState_L]$ is flat for spin chains and that
$\PC(L) = L \hmu$, for all $L \leq R$. From these properties one can
determine whether or not a given process is representable as a spin chain:
Is the $R$-block entropy equal to the statistical complexity? The properties
also suggest what the processes in the neighborhood of a spin chain
look like.

Finally, by way of making contact with applications to physics and computation,
we close by briefly outlining the relationship between crypticity and
dynamical irreversibility in physical processes \cite{Elli11a}.
Consider the \emph{morph map} $\phi: \CausalState_0 \to \{ \MS_{0:} \}$.
A process's entropy rate controls the prediction uncertainty of this map:
$\hmu \equiv \lim_{L \to \infty} H[\MS_{0:L} | \CausalState_0]$. Now,
consider the state uncertainty determined by the inverse of the morph map:
$\phi^{-1}: \Future \to \{ \CausalState_0 \}$. This is already familiar.
The crypticity controls this uncertainty:
$\PC = \lim_{L \to \infty} H[\CausalState_0|\MS_{0:L}]$. Just as
the entropy rate is a process's rate of producing information, the crypticity
is its rate of information loss or, what one can call, a process's 
\emph{information-processing irreversibility}. And the latter, appropriately
adapting Landauer's Principle \cite{Land89a}, provides a lower bound on the
energy dissipation required to support a process's irreversible intrinsic computation.
We leave the full development of the thermodynamics of intrinsic
computation, however, to another venue.

\section*{Acknowledgments}

This work was partially supported by NSF Grant No. PHY-0748828 and by the
Defense Advanced Research Projects Agency (DARPA) Physical Intelligence
Subcontract No. 9060-000709. The views, opinions, and findings contained in this
article are those of the authors and should not be interpreted as representing
the official views or policies, either expressed or implied, of the DARPA or the
Department of Defense.

\appendix

\section{Why Crypticity?}
\label{sec:WhyCrypticity}

There are many ways to assemble information-theoretic quantities---more
specifically, information measures \cite{Yeun91a}. Why should one care
about crypticity and cryptic order? What makes them special?
We show that crypticity stands out among reasonable alternative
measures by a rather direct comparison.

It turns out that there are fewer information quantities than one
might expect---at least fewer interesting ones---over pasts,
futures, and states. Let's limit ourselves to quantities that
depend on only a finite set of objects and require that we look
for a ``1-parameter finitization'' property, based on block
length. In this case, we can make an exhaustive list of the
information measures and describe each one. The list, at first,
appears long. But this length is illustrative of the fact that
crypticity and cryptic order really do capture a relatively unique
process property. Everything else is either trivial, periodic, or Markov.

Table \ref{tab:Alternatives} presents the list.
It was assembled in a direct way by systematically writing down
alternative expressions over single variables, pairs of variables
and their joint and conditional entropy possibilities, over three
variables, and so on. One could also consider enumerating only the
relevant sigma-algebra atoms. This, however, obscures parallels to
existing quantities.

In addition, alternatives such as $H(\MS_{-L:0} | \Past)$ are not
included, since they are trivial. Nor were quantities such as
$H(\Past | \MS_{-L:0})$ added, although they could be. Quantities
along these lines would needlessly expand the list, to little benefit.

As elsewhere here, we assume the state random variable denotes a causal state.

\begin{table}[hbt]
\centering
\setlength{\extrarowheight}{0.07in}
\begin{tabular}{|c|c|}
\hline
Information Measure & Property Detected \\

\hline
$H[A]$\\
\hline
$H[\Past] = H[\MS_{-L:0}]$  & Periodic\\
$H[\Future] = H[\MS_{0:L}]$ & Periodic\\

\hline
$H[A|B]$\\
\hline
$H[\CausalState|\Past] = H[\CausalState|\MS_{-L:0}]$  & Markov\\
$H[\CausalState|\Future] = H[\CausalState|\MS_{0:L}]$ & Markov\\
$H[\Future|\Past] = H[\Future|\MS_{-L:0}]$ & Markov\\
$H[\Past|\Future] = H[\Past|\MS_{0:L}]$& Markov\\
$H[\Past|\CausalState] = H[\MS_{-L:0}|\CausalState]$  & Periodic\\
$H[\Future|\CausalState] = H[\MS_{0:L}|\CausalState]$ & Periodic\\

\hline
$H[A|BC]$\\
\hline
$H[\CausalState|\Past\Future] = H[\CausalState|\MS_{-L:0}\Future]$&Cryptic Order \\
$H[\CausalState|\Past\Future] = H[\CausalState|\Past\MS_{0:L}]$&Trivial \\
$H[\Past|\CausalState\Future] = H[\Past|\CausalState\MS_{0:L}]$&Trivial \\
$H[\Future|\Past \CausalState] = H[\Future|\MS_{-L:0}\CausalState]$&Trivial \\
$H[\Past|\CausalState\Future] = H[\MS_{-L:0}|\CausalState\Future]$&Periodic \\
$H[\Future|\Past \CausalState] = H[\MS_{0:L}|\Past\CausalState]$&Periodic \\

\hline
$H[AB]$\\
\hline
$H[\Past \CausalState] = H[\MS_{-L:0} \CausalState]$&Periodic \\
$H[\CausalState\Future] = H[\CausalState\MS_{0:L}]$&Periodic \\
$H[\Past\Future] = H[\MS_{-L:0}\Future]$&Periodic \\
$H[\Past\Future] = H[\Past\MS_{0:L}]$&Periodic \\

\hline
$H[AB|C]$\\
\hline
$H[\Past \CausalState|\Future] = H[\MS_{-L:0}\CausalState|\Future]$&Periodic \\
$H[\CausalState\Future|\Past] = H[\CausalState\MS_{0:L}|\Past]$&Periodic \\
$H[\Past \CausalState|\Future] = H[\Past\CausalState|\MS_{0:L}]$&Markov \\
$H[\CausalState\Future|\Past] = H[\CausalState\Future|\MS_{-L:0}]$&Markov \\

\hline
$H[ABC]$\\
\hline
$H[\Past \CausalState\Future] = H[\MS_{-L:0}\CausalState\Future]$&Periodic \\
$H[\Past \CausalState\Future] = H[\Past\CausalState\MS_{0:L}]$&Periodic \\

\hline
\end{tabular}
\caption{Alternative information measures over the past,
  the future, and the causal state, when they achieve their limit at
  finite block length $L$. As seen, almost all are either trivial,
  periodic, or detect the Markov property. Cryptic order stands
  out as unique.
  }
\label{tab:Alternatives}
\end{table}

\section{Equivalence of Forward and Reverse Restricted State-Paths}
\label{sec:RetroPath}

Why are the restricted state-paths the same in the forward and backward
lattice diagrams of Figs. \ref{fig:paths} and \ref{fig:paths_cryptic}?
Recall that a forward path is allowed if
$\Prob(\MS_{0:L} = w, \CausalState_1 =
\causalstate_B|\CausalState_0 = \causalstate_A) \neq 0$.
Similarly, a backward path is allowed when $\Prob(\CausalState_0 =
\causalstate_A, \MS_{0:L} = w |\CausalState_L = \causalstate_B)
\neq 0$. Since both causal states $\causalstate_A$ and
$\causalstate_B$ have nonzero probability by definition of being
recurrent, we see that we can state both cases as paths for which
$\Prob(\CausalState_0 = \causalstate_A, \MS_{0:L} = w,
\CausalState_L = \causalstate_B) \neq 0$.

\begin{figure}[ht]
\includegraphics{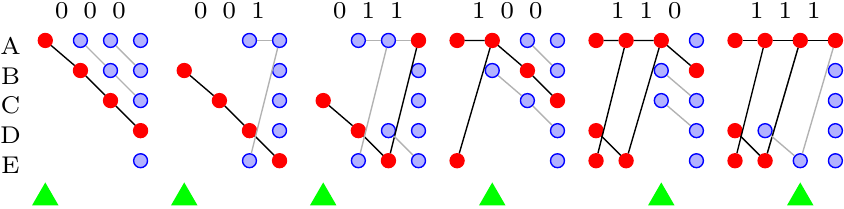}
\caption{Why forward and backward restricted paths are the same.
  In this figure state-paths are traced back from final states.
  Cf. Fig.~\ref{fig:paths_cryptic}.
  }
\label{fig:paths_retro}
\end{figure}

Figure \ref{fig:paths_retro} illustrates this by tracing state-paths backward
through the machine starting at each final state. Of course, since processes
and their \eMs\ are generally not counifilar, there will be splitting in these
paths. For
example, consider the paths that end in state $A$ on a $1$. $A$'s predecessors
on a $1$ are states $A$ and $E$.

Note that this produces a different initial set of candidate state-paths, when
compared with those in light blue in Fig.~\ref{fig:paths_cryptic}. Now, 
eliminate all paths that do not trace back successfully along the entire word.
Fig.~\ref{fig:paths_retro} shows these remaining state-paths in red and we see
that they are the same as those in Fig.~\ref{fig:paths_cryptic}.

\section{Crypticity and Co-unifilarity}
\label{app:counif}

Here, we explore the equivalence of $\EE = \Cmu$, co-unifilarity, and
$0$-crypticity using several results obtained in Ref.~\cite{Crut08b}. With a
small modification, the latter results allow for a more straightforward
proof that leads to a better understanding of these relations.

The ``forward'' argument is that $\PC(L) = 0$ implies crypticity
vanishes at all $L$. First, we recall two results.

Corollary 6 \cite{Crut08b}: If there exists a $k \ge 1$ for which $\PC(k)
= 0$, then $\PC(j) = 0$ for all $j \ge 1$.

Proposition 3 \cite{Crut08b}: $\lim_{k \to \infty} \PC(k) = \PC$.

Combining Cor. 6 and Prop. 3, we have the following:
If there exists a $k \ge 1$ for which $\PC(k)
= 0$, then $\PC(j) = 0$ for all $j \ge 1$ and $\PC = 0$.

The ``backward'' argument is that vanishing in the limit implies
crypticity vanishes at all $L$.

Since $\PC(k)$ is nonnegative (conditional entropy) and nondecreasing
(Prop. 2 \cite{Crut08b}) and limits to $\PC$ (Prop. 3 \cite{Crut08b}), we
have that $\PC = 0$ implies $\PC(k) = 0$, for all $k \ge 0$.

All that remains is to recall that co-unifilarity is identical to $\PC(1) = 0$
and this establishes the desired chain of implications:
\begin{align*}
\text{Co-unifilar}  & \iff \PC(1) = 0 \\
 & \iff \exists ~k \ge 1 : \PC(k) = 0 \\
 & \iff \PC(k) = 0, \forall ~k \ge 0 \\
 & \iff \PC = 0 \\
 & \iff 0\text{-cryptic} ~.
\end{align*}
The heart of the result falls in the middle. It shows us that any nontrivial
zero in $\PC(k)$ is equivalent to the entire function, as well as $\PC$
itself, vanishing.


\end{document}